\newtheorem{theorem}{Theorem}[section]
\newtheorem{proposition}[theorem]{Proposition}
\newtheorem{lemma}[theorem]{Lemma}
\newtheorem{corollary}[theorem]{Corollary}
\theoremstyle{definition}
\newtheorem{definition}[theorem]{Definition}
\theoremstyle{remark}
\newtheorem{remark}[theorem]{Remark}
\newcommand{\RR}{\mathbb{R}}
\newcommand{\ZZ}{\mathbb{Z}}
\newcommand{\wt}{\widetilde}
\newcommand{\eps}{\varepsilon}
\DeclareMathOperator{\ind}{ind}
\DeclareMathOperator{\sgn}{sgn}
\DeclareMathOperator{\Tr}{Tr}
\title[1+1D Möbius-Twisted Toy Model]{Inverse Möbius Spacetime in 1\,+\,1D Quantum Gravity:\\ Functional Analytic Structures, Dirac Spectrum, and Pin Geometry}
\author{Anik Chakraborty}
\address{Department of Mathematics, University of Delhi, India}
\email{achakraborty@maths.du.ac.in}
\date{\today}
\subjclass[2020]{57R15, 81T50, 83C45, 46A03, 58J50, 53C05, 55R10}
\keywords{Möbius band, Pin structure, time-reversal symmetry, quantum gravity, Dirac operator, $\eta$-invariant, quantum reference frames, topological vector spaces, frame bundles}
\begin{document}

\begin{abstract}
In this manuscript, we formulate a 1\,+\,1-dimensional Jackiw–Teitelboim gravity toy model whose Euclidean spacetime manifold is the Möbius band $M$. Since $M$ is non-orientable, the relevant spin-statistics structure is \emph{Pin} rather than Spin. To emphasize the role of orientation reversal, we refer to the universal orientable cover $\widetilde{M}$ as the \emph{inverse Möbius band}, which resolves the Möbius twist into an infinite ribbon equipped with a $\mathbb{Z}$ deck action. We compute the Stiefel–Whitney classes $w_1$, $w_2$, classify all $\mathrm{Pin}^{\pm}$ structures, construct the associated pinor bundles, and analyze the Dirac operator under the twisted equivariance condition
\[
\psi(x+1, w) = \gamma^w \psi(x, -w).
\]
Half-integer momentum quantization, spectral symmetry, vanishing mod-2 index, and $\eta_D(0) = 0$ follow. In JT gravity, the two inequivalent Pin lifts in each parity double the non-perturbative saddle-point sum, yet leave the leading Bekenstein–Hawking entropy unchanged. Full proofs and heat-kernel calculations are provided for completeness.
\end{abstract}

\maketitle

\section{Introduction}

Low-dimensional toy models of quantum gravity serve as testing grounds for ideas ranging from holography to topological phases.  Jackiw–Teitelboim (JT) gravity in two Euclidean dimensions is particularly tractable and admits a random-matrix dual \cite{SaadShenkerStanford,StanfordWitten}.  Most studies restrict to orientable spacetimes, but non-orientable surfaces are equally natural \cite{Weber2024}.  The Möbius band $M$ is the minimal such example with boundary; its single orientation-reversal can geometrically embody time-reversal symmetry in a simple setup.

We use the descriptive phrase \emph{inverse Möbius band} for the universal cover
\[
\wt{M}:=\RR\times[-1,1],\qquad
(x,w)\xmapsto{\;\gamma\;}(x+1,-w),
\]
as $\wt M$ resolves the twist into an infinite strip. This non-standard terminology emphasizes how the covering space ``inverts'' or untwists the orientation-reversing property of the Möbius band via the $\ZZ$ deck action, though mathematically it is simply the universal cover in standard parlance. In Section \ref{sec:Dirac} we show how this resolution translates into a twisted boundary condition on pinors, leading to half-integer momentum modes familiar from fermions with antiperiodic boundary conditions.

Throughout, all the claims are backed by explicit proofs; every spectral identity is derived using standard heat-kernel techniques.

\section{Topology and Characteristic Classes of the Möbius Band M}

\begin{definition}
\label{def:invM}
Let $\wt M:=\RR\times[-1,1]$.  The \emph{deck transformation}
\(
\gamma(x,w):=(x+1,-w)
\)
generates a free, properly discontinuous $\ZZ$-action (orbits are discrete and separated).  The quotient
\(M:=\wt M/\ZZ\)
is the usual Möbius band with boundary circles $\{w=\pm1\}$. We refer to $\wt M$ as the \emph{inverse Möbius band} to emphasize how it resolves the orientation-reversing twist of $M$ into an infinite strip.
\end{definition}

\begin{proposition}\label{prop:universalCov}
$\wt M$ is contractible and hence the universal cover of $M$.  Consequently $\pi_1(M)\cong\ZZ$.
\end{proposition}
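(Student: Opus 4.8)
The plan is to establish contractibility of $\wt M$ by an explicit deformation retraction, deduce simple connectivity, and then invoke the standard correspondence between deck groups of universal covers and fundamental groups.

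First I would observe that $\wt M=\RR\times[-1,1]$ is a product of convex subsets of $\RR$, hence convex, so the straight-line homotopy
\[
H\colon \wt M\times[0,1]\to \wt M,\qquad H\big((x,w),t\big)=\big((1-t)x,\,(1-t)w\big)
\]
is well defined (the second coordinate stays in $[-1,1]$ because that interval is convex and contains $0$) and continuous, and it retracts $\wt M$ onto the point $(0,0)$. Hence $\wt M$ is contractible, in particular path-connected and simply connected, with all higher homotopy groups trivial.

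Next I would use that, by Definition \ref{def:invM}, the $\ZZ$-action generated by $\gamma$ is free and properly discontinuous on the locally compact, locally path-connected, Hausdorff space $\wt M$. The standard theorem on properly discontinuous actions then gives that the quotient map $p\colon \wt M\to M=\wt M/\ZZ$ is a covering projection; the only point needing a word of care is that $\wt M$ is a manifold with boundary, but covering space theory applies verbatim here since $\gamma$ preserves $\partial\wt M=\RR\times\{\pm1\}$ and is a local homeomorphism everywhere, so local triviality of $p$ is unaffected. Because the total space $\wt M$ is simply connected, $p$ is \emph{the} universal cover of $M$.

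Finally, for a free properly discontinuous action of a group $G$ on a simply connected, locally path-connected space, the deck group is isomorphic to $\pi_1$ of the quotient; applying this with $G=\ZZ$ yields $\pi_1(M)\cong\ZZ$. As an independent cross-check I would note that $M$ deformation retracts onto its core circle $C:=(\RR\times\{0\})/\ZZ\cong S^1$ via $(x,w,t)\mapsto(x,(1-t)w)$, which descends to the quotient because it commutes with $\gamma$, recovering $\pi_1(M)\cong\pi_1(S^1)\cong\ZZ$. The homotopies here are entirely routine; the single genuinely load-bearing input is the theorem that a free properly discontinuous action produces a covering with deck group $G$, and the only place I would slow down is checking that the presence of boundary does not invalidate its hypotheses — which, as noted, it does not.
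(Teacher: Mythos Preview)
Your proof is correct and follows essentially the same approach as the paper: contract $\wt M$ to a point by a straight-line homotopy, then use that a free properly discontinuous $\ZZ$-action on a simply connected space yields a universal covering with deck group $\ZZ$. The only cosmetic difference is that the paper contracts in two stages (first onto the core line $\RR\times\{0\}$, then to the origin) whereas you contract in one stroke via convexity; your additional remarks on the boundary and the deformation retraction onto the core circle are sound and go slightly beyond what the paper writes out.
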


\begin{proof}
To show that $\wt M$ is contractible, we will construct an explicit homotopy to a point.  Let us consider the mapping $F: \wt M \times [0,1] \to \wt M$ that is defined by
\[
F((x,w), t) = (x, (1-t)w).
\]
At $t=0$, $F((x,w),0) = (x,w)$, which is the identity.  At $t=1$, $F((x,w),1) = (x,0)$, which maps to the line $\RR \times \{0\}$.  This line is contractible via $G: (\RR \times \{0\}) \times [0,1] \to \RR \times \{0\}$ given by $G((x,0), s) = ( (1-s)x, 0 )$, which contracts to $(0,0)$.  These homotopies are then composed to establish that $\wt M$ is contractible.

Since $\wt M$ is simply connected and the $\ZZ$-action acts freely and is properly discontinuous, the quotient map $p: \wt M \to M$ forms a covering map with deck group $\ZZ$. Therefore, $\wt M$ is recognized as the universal cover of $M$ by the lifting property of universal covers. Consequently, $\pi_1(M) \cong \ZZ$, since the deck group is isomorphic to the fundamental group.
\end{proof}

\subsection{Characteristic Classes of TM}

\begin{proposition}\label{prop:SWclasses}
For the tangent bundle $TM$:
\[
w_1(TM)\neq0\in H^1(M;\ZZ_2), \quad
w_2(TM)=0\in H^2(M;\ZZ_2)=0.
\]
\end{proposition}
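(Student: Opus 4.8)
The statement splits into two parts of very different weight. The vanishing $w_2(TM)=0$ is essentially free: I would first record that $M$ deformation retracts onto its core circle $C=\{w=0\}/\ZZ$, so that $H^k(M;\ZZ_2)\cong H^k(S^1;\ZZ_2)$ for every $k$; in particular $H^2(M;\ZZ_2)=0$, whence $w_2(TM)$, being an element of the trivial group, vanishes. The retraction is just the descent to $M$ of the linear homotopy $((x,w),t)\mapsto(x,(1-t)w)$ already used in the proof of Proposition~\ref{prop:universalCov}; it is $\gamma$-equivariant since $\gamma(x,(1-t)w)=(x+1,(1-t)(-w))$, hence well defined on the quotient.

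All the content is therefore in showing $w_1(TM)\neq 0$. The cleanest route is to restrict $TM$ to the core circle $C$ and compute there, exploiting that the inclusion $i\colon C\hookrightarrow M$ is a homotopy equivalence, so $i^*\colon H^1(M;\ZZ_2)\xrightarrow{\ \sim\ }H^1(C;\ZZ_2)$ is an isomorphism. Choosing a $\gamma$-invariant Riemannian metric on $\wt M$ (the flat one descends), I would split $TM|_C\cong TC\oplus N$ into the tangent line bundle of the circle and the normal line bundle $N$ of $C$ in $M$. By the Whitney sum formula $i^*w_1(TM)=w_1(TM|_C)=w_1(TC)+w_1(N)=w_1(N)$, since $TC$ is the trivial tangent bundle of $S^1$.

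The one genuine point is to identify $N$ as the nontrivial real line bundle over $S^1$, i.e.\ the Möbius line bundle. This I would read off from the deck action: near $C$ a trivialization of $N$ is given by the normal field $\partial_w$, and transporting once around $C$ amounts to applying $d\gamma$, which sends $\partial_w\mapsto-\partial_w$ because $\gamma(x,w)=(x+1,-w)$. Thus $N$ has clutching constant $-1\in O(1)$, so $w_1(N)$ is the generator of $H^1(C;\ZZ_2)\cong\ZZ_2$. Consequently $i^*w_1(TM)=w_1(N)\neq 0$, and since $i^*$ is an isomorphism, $w_1(TM)\neq 0$ in $H^1(M;\ZZ_2)$.

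I expect the only real obstacle to be pedantic rather than conceptual: pinning down the phrase ``transport once around $C$ is $d\gamma$'' carefully enough that the clutching description of $N$ is unambiguous (one can instead argue directly that $N$ admits no nonvanishing global section by tracking the sign of $\partial_w$). As a sanity check I would also mention the high-level argument that $w_1$ of a tangent bundle is precisely the obstruction to orientability and $M$ is the standard non-orientable surface with boundary; but I would keep the explicit normal-bundle computation as the primary proof, since it sits naturally in the coordinate framework of the paper and makes visible the sign $-1$ that reappears in the twisted pinor boundary condition of Section~\ref{sec:Dirac}.
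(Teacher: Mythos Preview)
Your proof is correct and more detailed than the paper's. For $w_2$, the paper invokes Poincar\'e--Lefschetz duality in a somewhat garbled way, whereas your deformation retraction onto the core circle is cleaner and immediately gives $H^2(M;\ZZ_2)\cong H^2(S^1;\ZZ_2)=0$. For $w_1$, the paper simply quotes that $w_1(TM)$ is the orientation obstruction and asserts that the loop around the core reverses orientation; you instead restrict $TM$ to the core, split off the normal line bundle $N$, and identify $N$ as the M\"obius line bundle by reading the sign flip $\partial_w\mapsto-\partial_w$ directly from $d\gamma$. Your route is genuinely different in emphasis: it replaces the abstract ``$w_1$ detects orientability'' principle (which you relegate to a sanity check) with an explicit clutching computation that isolates exactly the $-1$ responsible for non-orientability. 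This buys you a concrete link to the twisted boundary condition in Section~\ref{sec:Dirac}, as you note, at the cost of a slightly longer argument. Both are sound; yours is more self-contained within the coordinate framework of the paper.
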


\begin{proof}
The first Stiefel-Whitney class $w_1(TM)$ detects orientability.  To see that $M$ is non-orientable, note that the Möbius band has a single boundary component that is non-orientable in the sense that a loop around the central circle reverses orientation.  Formally, $H^1(M;\ZZ_2) \cong \ZZ_2$, generated by the non-trivial class corresponding to this loop, so $w_1(TM) \neq 0$.

For $w_2(TM)$, recall that on a manifold with boundary, $H^2(M;\ZZ_2) = 0$ by Poincaré-Lefschetz duality: $H^2(M, \partial M; \ZZ_2) \cong H_0(M; \ZZ_2) \cong \ZZ_2$, but the absolute cohomology $H^2(M;\ZZ_2)$ vanishes because the boundary inclusion induces an isomorphism from the relative to absolute groups.
\end{proof}

\section{Pin Structures on TM}\label{sec:Pin}

Recall that Pin$^+$ and Pin$^-$ are the double covers of O(n) with relations differing by signs in the Clifford algebra (Pin$^+$ squares to +1 for reflections, Pin$^-$ to -1).

Applying the Atiyah-Bott–Shapiro obstruction criteria \cite{ABS} to Proposition \ref{prop:SWclasses} yields:

\begin{theorem}\label{thm:PinCount}
$M$ admits precisely two inequivalent $\mathrm{Pin}^{+}$ and two inequivalent $\mathrm{Pin}^{-}$ structures, distinguished by $H^1(M;\ZZ_2)\cong\ZZ_2$.
\end{theorem}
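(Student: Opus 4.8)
The plan is to combine the standard obstruction-theoretic description of $\mathrm{Pin}^{\pm}$ structures with the cohomology of $M$ recorded in Proposition \ref{prop:SWclasses} and Proposition \ref{prop:universalCov}. Recall that a $\mathrm{Pin}^{\pm}$ structure on a rank-$n$ real vector bundle $E\to M$ is a lift of the classifying map $M\to BO(n)$ along the fibration $B\mathrm{Pin}^{\pm}(n)\to BO(n)$, whose homotopy fiber is $B\ZZ_2\simeq K(\ZZ_2,1)$. The single obstruction to the existence of such a lift is a class in $H^2(M;\ZZ_2)$: it equals $w_2(E)$ in the $\mathrm{Pin}^{+}$ case and $w_2(E)+w_1(E)^2$ in the $\mathrm{Pin}^{-}$ case \cite{ABS}. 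When this obstruction vanishes the set of lifts is nonempty, and any two lifts differ by a well-defined class in $H^1(M;\ZZ_2)$; this exhibits the set of $\mathrm{Pin}^{\pm}$ structures as a torsor over $H^1(M;\ZZ_2)$, with the action free and transitive.

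First I would check that both obstructions vanish. By Proposition \ref{prop:SWclasses} we have $H^2(M;\ZZ_2)=0$, hence $w_2(TM)=0$ and $w_1(TM)^2=0$ automatically. Therefore the $\mathrm{Pin}^{+}$ obstruction $w_2(TM)$ and the $\mathrm{Pin}^{-}$ obstruction $w_2(TM)+w_1(TM)^2$ both vanish, so $M$ admits at least one structure of each type. Next I would compute $H^1(M;\ZZ_2)$: since $M$ deformation retracts onto its core circle $\{w=0\}$ — equivalently, by Proposition \ref{prop:universalCov}, $M$ is a $K(\ZZ,1)$ — the universal coefficient theorem gives $H^1(M;\ZZ_2)\cong \mathrm{Hom}(\pi_1(M),\ZZ_2)\cong \mathrm{Hom}(\ZZ,\ZZ_2)\cong\ZZ_2$. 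Combined with the torsor property, the set of $\mathrm{Pin}^{+}$ structures and the set of $\mathrm{Pin}^{-}$ structures each have exactly $|H^1(M;\ZZ_2)|=2$ elements, and within each family the two structures are distinguished precisely by their difference class in $H^1(M;\ZZ_2)\cong\ZZ_2$ — concretely, by whether the induced $\ZZ_2$-valued holonomy around the core circle is trivial or not.

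The step I expect to require the most care is the torsor claim: making precise that, once nonempty, the set of $\mathrm{Pin}^{\pm}$ structures is acted on \emph{freely and transitively} by $H^1(M;\ZZ_2)$, rather than merely standing in non-canonical bijection with it. I would establish this either through the obstruction spectral sequence for the fibration $B\mathrm{Pin}^{\pm}(n)\to BO(n)$ with fiber $K(\ZZ_2,1)$ — so that the set of lifts, when nonempty, is a principal homogeneous space under $[M,K(\ZZ_2,1)]=H^1(M;\ZZ_2)$ — or by a direct \v{C}ech cocycle argument: fix one $\mathrm{Pin}^{\pm}$-lift of the $O(n)$ transition functions and twist it by a $\ZZ_2$-valued \v{C}ech $1$-cocycle, checking that cohomologous twists yield isomorphic structures and that the trivial class yields the original. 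A minor additional point, which I would dispatch in the same breath, is that distinct difference classes give genuinely non-isomorphic $\mathrm{Pin}^{\pm}$ bundles over $M$: an isomorphism of $\mathrm{Pin}^{\pm}$ structures covering $\mathrm{id}_{TM}$ forces the difference class to be zero, so the count of $2$ is a count of equivalence classes, exactly as asserted.
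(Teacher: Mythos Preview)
Your proof is correct and follows essentially the same route as the paper: vanish the $w_2$ (resp.\ $w_2+w_1^2$) obstruction using Proposition~\ref{prop:SWclasses}, then invoke the torsor structure over $H^1(M;\ZZ_2)\cong\ZZ_2$ to count two lifts of each parity. Your treatment is more thorough than the paper's---you explicitly compute $H^1(M;\ZZ_2)$ via $\pi_1$ and the universal coefficient theorem, and you justify the free-and-transitive nature of the $H^1$-action rather than merely asserting it---but the underlying argument is the same.
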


\begin{proof}
A Pin$^+$ structure exists if $w_2(TM) = 0$, and a Pin$^-$ structure exists if $w_2(TM) + w_1(TM)^2 = 0$. By Proposition \ref{prop:SWclasses}, both conditions hold.

Given existence from the vanishing obstructions, Pin structures are lifts of the orthogonal frame bundle $O(TM)$ to the Pin group. The space of such lifts forms a torsor over $H^1(M; \ZZ_2)$, which acts by tensor product (over $\mathbb{R}$) with the real line bundles classified by $H^1(M; \ZZ_2)$. Since $H^1(M; \ZZ_2) \cong \ZZ_2$, the torsor structure yields two topologically inequivalent lifts for each parity (Pin$^+$ and Pin$^-$).
\end{proof}

\subsection{Frame Bundle Geometry}\label{sec:frame-bundle}

We analyze the orthogonal frame bundle $\mathrm{O}(TM)$ and its Pin lifts to understand the geometric foundation of Pin structures on $M$.

\begin{definition}[Orthogonal Frame Bundle]\label{def:frame-bundle}
Let $\mathrm{O}(TM) \to M$ denote the principal $\mathrm{O}(2)$-bundle of orthogonal frames on the tangent bundle $TM$. A frame at $p \in M$ is an ordered basis $(e_1, e_2)$ of $T_pM$ with $\langle e_i, e_j \rangle = \delta_{ij}$ in the flat metric.
\end{definition}

The theory of principal bundles and frame bundles is comprehensively treated in Kobayashi-Nomizu \cite{Kobayashi1963} and Steenrod \cite{Steenrod1951}.

\begin{proposition}[Frame Bundle Structure]\label{prop:frame-structure}
$\mathrm{O}(TM)$ does not reduce to an $\mathrm{SO}(2)$-bundle due to the non-orientability of $M$. The frame bundle satisfies the fiber sequence
\[
\mathrm{SO}(2) \to \mathrm{O}(TM) \to M
\]
with structure group $\mathrm{O}(2) \cong \mathrm{SO}(2) \rtimes \ZZ_2$.
\end{proposition}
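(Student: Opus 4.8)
The plan is to derive all three assertions from the non-vanishing of $w_1(TM)$ in Proposition \ref{prop:SWclasses} together with the elementary structure theory of $\mathrm{O}(2)$. For the non-reduction statement I would invoke the standard dictionary between reductions of structure group and sections of associated bundles: a reduction of the principal $\mathrm{O}(2)$-bundle $\mathrm{O}(TM)$ to the subgroup $\mathrm{SO}(2)$ is the same data as a global section of the associated fiber bundle $\mathrm{O}(TM)\times_{\mathrm{O}(2)}\bigl(\mathrm{O}(2)/\mathrm{SO}(2)\bigr)$. Since $\mathrm{O}(2)/\mathrm{SO}(2)\cong\ZZ_2$, this bundle is exactly the orientation double cover $\widehat M\to M$; for the Möbius band it is the annulus $S^1\times[-1,1]$, corresponding to the index-two subgroup $2\ZZ\subset\ZZ\cong\pi_1(M)$. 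A connected double cover admits no section (the image of a section would be a proper, nonempty subset of the connected total space that is simultaneously open and closed), and the cover is connected precisely when $M$ is non-orientable, i.e.\ when $w_1(TM)\neq0$. Proposition \ref{prop:SWclasses} gives $w_1(TM)\neq0$, so no section and hence no $\mathrm{SO}(2)$-reduction exists.

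For the decomposition $\mathrm{O}(2)\cong\mathrm{SO}(2)\rtimes\ZZ_2$ I would argue at the group level: fix the reflection $r=\operatorname{diag}(1,-1)\in\mathrm{O}(2)$. Then $\mathrm{SO}(2)\trianglelefteq\mathrm{O}(2)$ is normal of index two, $\langle r\rangle\cong\ZZ_2$, $\mathrm{SO}(2)\cap\langle r\rangle=\{I\}$, and every element of negative determinant is uniquely $R_\theta r$, so $\mathrm{O}(2)=\mathrm{SO}(2)\cdot\langle r\rangle$ is an internal semidirect product with $r$ acting on $\mathrm{SO}(2)$ by $R_\theta\mapsto rR_\theta r^{-1}=R_{-\theta}$, the non-trivial automorphism of $\mathrm{SO}(2)\cong S^1$. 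This identifies the structure group and exhibits $\mathrm{SO}(2)$ as the identity component $\mathrm{O}(2)_0$.

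Finally, for the fiber sequence: $\mathrm{O}(TM)\to M$ is by Definition \ref{def:frame-bundle} a locally trivial principal $\mathrm{O}(2)$-bundle, hence a fibration with typical fiber $\mathrm{O}(2)$; assigning to each frame its orientation class gives a natural map $q:\mathrm{O}(TM)\to\widehat M$ onto the orientation double cover (the associated $\mathrm{O}(2)/\mathrm{SO}(2)$-bundle), and $q$ is a principal $\mathrm{SO}(2)$-bundle, so composing with $\widehat M\to M$ produces the tower $\mathrm{SO}(2)\hookrightarrow\mathrm{O}(TM)\xrightarrow{q}\widehat M\to M$; each fiber of $\mathrm{O}(TM)\to M$ is then a disjoint union of two $\mathrm{SO}(2)$-torsors interchanged by the deck involution. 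I expect the only real obstacle to be expository rather than mathematical: the displayed sequence $\mathrm{SO}(2)\to\mathrm{O}(TM)\to M$ is mildly abusive, since the honest fiber of $\mathrm{O}(TM)\to M$ is $\mathrm{O}(2)$, not $\mathrm{SO}(2)$, so the write-up must make clear that $\mathrm{SO}(2)$ enters only as the identity-component subbundle, with the residual $\ZZ_2$ absorbed into the intermediate orientation double cover. None of this contradicts the first assertion, since nontriviality of $\mathrm{O}(TM)$ as an $\mathrm{O}(2)$-bundle is already forced by $w_1(TM)\neq0$.
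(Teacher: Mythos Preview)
Your argument is correct, but it proceeds along a more abstract track than the paper's own proof. The paper simply cites $w_1(TM)\neq0$ as the obstruction and then exhibits the failure of reduction concretely: parallel transport of the frame $(e_x,e_w)$ once around the central loop returns $(e_x,-e_w)$, so no consistent global orientation exists. You instead unwind the obstruction through the standard dictionary (reduction to $\mathrm{SO}(2)$ $\Leftrightarrow$ section of the orientation double cover $\Leftrightarrow$ disconnectedness of $\widehat M$), and you supply the group-theoretic verification of $\mathrm{O}(2)\cong\mathrm{SO}(2)\rtimes\ZZ_2$ that the paper leaves implicit. Your route is cleaner as a general argument and your remark that the displayed ``fiber sequence'' is a mild abuse (the honest fiber being $\mathrm{O}(2)$, with $\mathrm{SO}(2)$ the identity component) is a fair criticism the paper does not address; the paper's route, on the other hand, has the virtue of being hands-on and dovetails directly with the explicit lifted frame bundle computation in Theorem~\ref{thm:lifted-frame}, where the same $(e_x,e_w)\mapsto(e_x,-e_w)$ reappears as the action of the deck transformation.
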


\begin{proof}
Since $M$ is non-orientable, the frame bundle $\mathrm{O}(TM)$ cannot reduce to an $\mathrm{SO}(2)$-bundle. The obstruction is precisely $w_1(TM) \neq 0$ (Proposition \ref{prop:SWclasses}).

Explicitly, parallel transport of a frame $(e_x, e_w)$ around the central non-orientable loop results in $(e_x, -e_w)$, demonstrating the orientation reversal that prevents reduction to $\mathrm{SO}(2)$.
\end{proof}

\begin{theorem}[Lifted Frame Bundle on Universal Cover]\label{thm:lifted-frame}
The frame bundle $\mathrm{O}(T\wt{M}) \to \wt{M}$ admits a $\ZZ$-action lifting the deck transformation $\gamma(x,w) = (x+1,-w)$:
\[
\widetilde{\gamma}: \mathrm{O}(T\wt{M}) \to \mathrm{O}(T\wt{M}), \quad 
\widetilde{\gamma}(x,w; e_x, e_w) = (x+1, -w; e_x, -e_w)
\]
\end{theorem}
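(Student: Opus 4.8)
The plan is to produce $\widetilde\gamma$ as the \emph{canonical} lift of the diffeomorphism $\gamma$ to the orthonormal frame bundle, using the flatness of $\widetilde M$ to make everything explicit. Since $\widetilde M=\RR\times[-1,1]\subset\RR^2$ carries the flat metric, its tangent bundle is canonically trivial, $T\widetilde M\cong\widetilde M\times\RR^2$, via the global orthonormal coordinate frame $(\partial_x,\partial_w)$; hence $\mathrm{O}(T\widetilde M)\cong\widetilde M\times\mathrm{O}(2)$, where $(x,w;A)$ records the frame whose two vectors are the columns of $A\in\mathrm{O}(2)$ written in the basis $(\partial_x,\partial_w)$. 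The presence of the boundary $\{w=\pm1\}$ is harmless here: $\gamma$ is a diffeomorphism of the manifold-with-boundary $\widetilde M$, and $d\gamma$ is defined on the full tangent spaces at boundary points.

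First I would compute $d\gamma$. In the coordinate frame, $d\gamma_{(x,w)}\colon T_{(x,w)}\widetilde M\to T_{(x+1,-w)}\widetilde M$ is the constant reflection $R:=\mathrm{diag}(1,-1)\in\mathrm{O}(2)$, since $\partial_x\mapsto\partial_x$ and $\partial_w\mapsto-\partial_w$. Because any diffeomorphism $\phi$ acts functorially on frame bundles by pushforward, $F\mapsto d\phi\circ F$, and because $R\in\mathrm{O}(2)$ carries orthonormal frames to orthonormal frames ($\gamma$ being an isometry of the flat metric), this pushforward descends to a well-defined map $\widetilde\gamma\colon\mathrm{O}(T\widetilde M)\to\mathrm{O}(T\widetilde M)$ covering $\gamma$, given in the trivialization by $\widetilde\gamma(x,w;A)=(x+1,-w;RA)$. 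On the distinguished coordinate frame $A=I$, i.e.\ $(e_x,e_w)=(\partial_x,\partial_w)$, this reproduces exactly the formula in the statement, $\widetilde\gamma(x,w;e_x,e_w)=(x+1,-w;e_x,-e_w)$.

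Next I would verify the two properties that make $\widetilde\gamma$ a genuine principal-bundle lift: equivariance, $R(Ag)=(RA)g$ for $g\in\mathrm{O}(2)$, so $\widetilde\gamma$ commutes with the right $\mathrm{O}(2)$-action and is a bundle automorphism; and $p\circ\widetilde\gamma=\gamma\circ p$, immediate from the construction, where $p$ is the projection. Observe that $R$ lies in the non-identity component of $\mathrm{O}(2)$, in accord with Proposition \ref{prop:frame-structure}: the lift is forced to use the reflection part, which is why $\widetilde\gamma$ swaps the two components of $\mathrm{O}(T\widetilde M)$ and why $\mathrm{O}(TM)$ cannot reduce to $\mathrm{SO}(2)$ after passing to the quotient. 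Finally, to upgrade a single automorphism to a $\ZZ$-action, iterate: by the chain rule $d(\gamma\circ\gamma)=d\gamma\circ d\gamma$, so functoriality of the pushforward gives $\widetilde{\gamma^{\,n}}=\widetilde\gamma^{\,n}$ for every $n\in\ZZ$ (with $\widetilde{\gamma^{-1}}$ supplied by $d(\gamma^{-1})$), explicitly $\widetilde\gamma^{\,n}(x,w;A)=(x+n,(-1)^n w;R^n A)$; this $\ZZ$-action is free and properly discontinuous because its projection under $p$ is precisely the deck action of Definition \ref{def:invM}, which has those properties.

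I expect no serious obstacle: the content of the theorem is just that the deck symmetry of $\widetilde M$, being realized by an honest isometry, propagates automatically to the frame bundle. The one point demanding care is the bookkeeping in the middle step — keeping the pushforward convention (left multiplication of frames by $d\gamma$, i.e.\ by $R$) consistent with the right $\mathrm{O}(2)$-action that defines the principal structure, so that $\widetilde\gamma$ is confirmed to be an \emph{equivariant} lift rather than merely a set map covering $\gamma$.
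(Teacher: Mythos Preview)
Your proof is correct and follows essentially the same approach as the paper: compute the differential $d\gamma$ on the coordinate vector fields, push forward orthonormal frames, and iterate to obtain the $\ZZ$-action. Your version is in fact more thorough than the paper's, since you explicitly trivialize $\mathrm{O}(T\widetilde M)\cong\widetilde M\times\mathrm{O}(2)$, verify right-$\mathrm{O}(2)$-equivariance (making $\widetilde\gamma$ a principal bundle automorphism rather than merely a set map), and give the general formula $\widetilde\gamma^{\,n}(x,w;A)=(x+n,(-1)^n w;R^n A)$, whereas the paper only computes $\widetilde\gamma^2$ and leaves the equivariance implicit.
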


\begin{proof}
Step 1: Well-definedness. The deck transformation $\gamma$ induces a diffeomorphism $T\gamma: T\wt{M} \to T\wt{M}$ given by:
\[
T\gamma \left(\frac{\partial}{\partial x}\right) = \frac{\partial}{\partial x}, \quad 
T\gamma \left(\frac{\partial}{\partial w}\right) = -\frac{\partial}{\partial w}
\]

Step 2: Frame transformation. An orthogonal frame $(e_x, e_w)$ at $(x,w)$ maps to the orthogonal frame $(e_x, -e_w)$ at $(x+1,-w)$, preserving the metric tensor.

Step 3: Deck group action. The map $\widetilde{\gamma}^2$ shifts by 2 in $x$, generating the full $\ZZ$-action:
\[
\widetilde{\gamma}^2(x,w; e_x, e_w) = \widetilde{\gamma}(x+1,-w; e_x, -e_w) = (x+2,w; e_x, e_w)
\]
The quotient $M = \wt{M}/\ZZ$ identifies orbits under this free and proper $\ZZ$-action, consistent with the universal cover structure (Proposition \ref{prop:universalCov}).
\end{proof}

\begin{definition}[Pin Frame Bundle]\label{def:pin-frame-bundle}
A Pin frame bundle $\mathrm{Pin}^{\pm}(TM) \to M$ is a principal $\mathrm{Pin}^{\pm}(2)$-bundle that is a double cover of $\mathrm{O}(TM)$ via the covering map $\mathrm{Pin}^{\pm}(2) \to \mathrm{O}(2)$.
\end{definition}

\begin{theorem}[Classification of Pin Frame Bundles]\label{thm:pin-frame-classification}
$M$ admits exactly four inequivalent Pin frame bundles:
\begin{align}
\mathrm{Pin}^+_0(TM), \quad \mathrm{Pin}^+_1(TM), \quad \mathrm{Pin}^-_0(TM), \quad \mathrm{Pin}^-_1(TM)
\end{align}
distinguished by elements of $H^1(M; \ZZ_2) \cong \ZZ_2$ for each parity.
\end{theorem}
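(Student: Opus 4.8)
The plan is to recognize that this theorem is the frame-bundle incarnation of Theorem~\ref{thm:PinCount}, obtained by combining two inputs: (i) existence of at least one $\mathrm{Pin}^\pm$ frame bundle, which follows from the vanishing obstructions of Proposition~\ref{prop:SWclasses} via the Atiyah--Bott--Shapiro criterion, and (ii) the fact that the set of such bundles, once nonempty, is a torsor over $H^1(M;\ZZ_2)$. Concretely, I would first recall that $w_2(TM)=0$ supplies a $\mathrm{Pin}^+(2)$-reduction of $\mathrm{O}(TM)$ and $w_2(TM)+w_1(TM)^2=0$ supplies a $\mathrm{Pin}^-(2)$-reduction, so both parities are realized as double covers of $\mathrm{O}(TM)$ in the sense of Definition~\ref{def:pin-frame-bundle}.

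The core of the argument is a classification of the lifts. I would pass to the universal cover: by Proposition~\ref{prop:universalCov}, $\wt M$ is contractible, so $\mathrm{O}(T\wt M)\cong \wt M\times \mathrm{O}(2)$ is trivial and any $\mathrm{Pin}^\pm$-double cover of it is $\wt M\times \mathrm{Pin}^\pm(2)$. By covering-space theory, a Pin frame bundle on $M$ is the same data as a lift of the deck $\ZZ$-action of Theorem~\ref{thm:lifted-frame} from $\mathrm{O}(T\wt M)$ to $\wt M\times\mathrm{Pin}^\pm(2)$. Such a lift is determined by a single element $g\in\mathrm{Pin}^\pm(2)$ projecting to the $\mathrm{O}(2)$-part of $T\gamma$, which by Proposition~\ref{prop:frame-structure} is the reflection $R=\mathrm{diag}(1,-1)$. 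The fiber of $\mathrm{Pin}^\pm(2)\to\mathrm{O}(2)$ over $R$ has exactly two points $g,-g$, with $-1$ the nontrivial kernel element; both produce a free, properly discontinuous $\ZZ$-action on $\wt M\times\mathrm{Pin}^\pm(2)$ (freeness and properness are inherited from the base action), hence both descend to Pin frame bundles on $M$. This gives at least and at most two structures per parity.

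For inequivalence I would argue that an isomorphism of Pin frame bundles over $M$ covering the identity of $\mathrm{O}(TM)$ lifts to a $\ZZ$-equivariant automorphism of $\wt M\times\mathrm{Pin}^\pm(2)$ of the form $(m,p)\mapsto(m,\phi(m)p)$ with $\phi\colon\wt M\to\ker(\mathrm{Pin}^\pm(2)\to\mathrm{O}(2))=\ZZ_2$; since $\wt M$ is connected, $\phi$ is constant, hence central, and equivariance forces the two chosen lift-elements to satisfy $g'=\phi g\phi^{-1}=g$, contradicting $g'=-g$. Therefore the two lifts in each parity are inequivalent, and equivalence classes of Pin frame bundles of fixed parity biject with the two elements of $H^1(M;\ZZ_2)\cong\mathrm{Hom}(\pi_1 M,\ZZ_2)\cong\ZZ_2$, the torsor being \emph{free} precisely because the distinguishing invariant is the class $[\phi]$. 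Running this for both $\mathrm{Pin}^+$ and $\mathrm{Pin}^-$ yields the four bundles $\mathrm{Pin}^+_0(TM),\mathrm{Pin}^+_1(TM),\mathrm{Pin}^-_0(TM),\mathrm{Pin}^-_1(TM)$.

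I expect the main obstacle to be the faithfulness/inequivalence step: one must verify that the $H^1(M;\ZZ_2)$-torsor action is free, i.e., that tensoring by the nontrivial real line bundle genuinely changes the isomorphism type rather than merely producing an abstractly isomorphic bundle. The equivariant-automorphism computation above handles this, but care is needed that ``isomorphism of Pin frame bundles'' means an $\mathrm{O}(2)$-equivariant diffeomorphism over $M$ compatible with $\mathrm{Pin}^\pm(2)\to\mathrm{O}(2)$ and with the identity on $\mathrm{O}(TM)$; with a laxer notion the count can collapse. A secondary, bookkeeping point is confirming that there are no further structures: this is where $H^2(M;\ZZ_2)=0$ from Proposition~\ref{prop:SWclasses} enters, guaranteeing that the \v{C}ech lifting obstruction for the $\mathrm{O}(2)$-valued transition cocycle vanishes and that the ambiguity of lifts is exactly $H^1(M;\ZZ_2)$, not a proper quotient of it.
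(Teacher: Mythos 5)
Your proof is correct, but it takes a genuinely different route from the paper's. The paper's proof is essentially a citation: it invokes the general classification theorem of Lawson--Michelsohn (existence controlled by $H^2(M;\ZZ_2)=0$, lifts forming a torsor over $H^1(M;\ZZ_2)$) and concludes by counting. Your argument, by contrast, is self-contained and constructive: you pass to the contractible universal cover $\wt M$ where the frame bundle trivializes, observe that a $\mathrm{Pin}^\pm$ frame bundle on $M$ is equivalent to a lift of the $\ZZ$ deck action from $\mathrm{O}(T\wt M)$ to $\wt M\times\mathrm{Pin}^\pm(2)$, reduce the choice of lift to choosing an element of the two-point fiber of $\mathrm{Pin}^\pm(2)\to\mathrm{O}(2)$ over the reflection $\mathrm{diag}(1,-1)$, and then verify inequivalence via the equivariant-automorphism computation (with $\phi$ forced to be a constant valued in the central kernel $\{\pm1\}$). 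What your approach buys is transparency: it exhibits the two Pin frame bundles per parity concretely and explains exactly where the factor of $\ZZ_2$ comes from, rather than appealing to a black-box classification theorem; it also makes the freeness of the torsor action an explicit computation rather than an imported fact. Two minor points of precision: the description of the $\mathrm{O}(2)$-part of $T\gamma$ as $\mathrm{diag}(1,-1)$ is really the content of Theorem~\ref{thm:lifted-frame} rather than Proposition~\ref{prop:frame-structure}, and in the last paragraph the role of $H^2(M;\ZZ_2)=0$ is purely for existence; the ``not a proper quotient'' concern (freeness of the $H^1$-torsor action) is settled by your automorphism argument, not by the vanishing of $H^2$.
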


\begin{proof}
Existence: From Theorem \ref{thm:PinCount}, both $\mathrm{Pin}^+$ and $\mathrm{Pin}^-$ structures exist since $w_2(TM) = 0$ and $w_2(TM) + w_1(TM)^2 = 0$.

Classification: The classification follows the theorem in \cite[Chapter II, Theorem 1.13]{LawsonMichelsohn1989}. Pin frame bundles are classified by their obstruction classes in $H^2(M; \ZZ_2)$ for existence and $H^1(M; \ZZ_2)$ for inequivalent lifts. Since $H^2(M; \ZZ_2) = 0$ (Proposition \ref{prop:SWclasses}), existence is guaranteed.

The space of Pin frame bundles forms a torsor over $H^1(M; \ZZ_2) \cong \ZZ_2$, giving two topologically distinct lifts for each of $\mathrm{Pin}^{\pm}(2)$.
\end{proof}

\begin{proposition}[Associated Pinor Bundles]\label{prop:associated-pinor}
Each Pin frame bundle $\mathrm{Pin}^{\pm}_\epsilon(TM)$ gives rise to an associated pinor bundle:
\[
S^{\pm}_\epsilon := \mathrm{Pin}^{\pm}_\epsilon(TM) \times_{\rho} \mathbb{C}^2
\]
where $\rho: \mathrm{Pin}^{\pm}(2) \to \mathrm{GL}(2,\mathbb{C})$ is the fundamental spinor representation.
\end{proposition}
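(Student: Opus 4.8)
The statement is an instance of the standard associated-bundle (mixing) construction, so the plan has three parts: first pin down the representation $\rho$, then run the construction on the principal $\mathrm{Pin}^{\pm}(2)$-bundle $P:=\mathrm{Pin}^{\pm}_{\epsilon}(TM)$ of Theorem~\ref{thm:pin-frame-classification}, and finally check that the output is a genuine rank-$2$ complex vector bundle carrying Clifford multiplication rather than a bundle pulled back from $\mathrm{O}(TM)$.

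For the representation, I would use the complexified Clifford algebra in dimension two: $\mathrm{Cl}_{2}\otimes_{\RR}\CC\cong M_2(\CC)$ has, up to isomorphism, a unique irreducible module $\Delta\cong\CC^2$. Since $\mathrm{Pin}^{\pm}(2)$ is a closed subgroup of the unit group of the real Clifford algebra $\mathrm{Cl}_{2,0}$ (resp.\ $\mathrm{Cl}_{0,2}$), hence of $(\mathrm{Cl}_2\otimes\CC)^{\times}$, restricting the module action gives a smooth homomorphism $\rho\colon\mathrm{Pin}^{\pm}(2)\to\mathrm{GL}(2,\CC)$. Two points need checking here: that $\rho$ is a Lie group homomorphism, which is immediate as it is the restriction of an algebra representation to a closed subgroup; and that the nontrivial element of the kernel of $\mathrm{Pin}^{\pm}(2)\to\mathrm{O}(2)$, namely $-1\in\mathrm{Cl}^{\times}$, acts as $-\mathrm{id}_{\CC^2}$ --- this is exactly what makes $S^{\pm}_{\epsilon}$ a pinor (rather than an $\mathrm{O}(2)$-tensor) bundle, and it follows from $-1$ acting by $-1$ on $\Delta$.

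Next I would form $P\times\CC^2$ equipped with the right $\mathrm{Pin}^{\pm}(2)$-action $(p,v)\cdot g=(p\cdot g,\ \rho(g)^{-1}v)$. This action is free, since it is already free on the $P$-factor, and proper, since $\mathrm{Pin}^{\pm}(2)$ is compact; hence the quotient $S^{\pm}_{\epsilon}:=(P\times\CC^2)/\mathrm{Pin}^{\pm}(2)$ is a smooth manifold and $[p,v]\mapsto\pi_P(p)$ is a well-defined smooth surjection onto $M$. A local section $s\colon U\to P$ of the principal bundle then yields the trivialization $U\times\CC^2\xrightarrow{\ \sim\ }S^{\pm}_{\epsilon}|_U$, $(x,v)\mapsto[s(x),v]$, and on overlaps the transition cocycle of $S^{\pm}_{\epsilon}$ is $\rho$ applied to that of $P$; as these take values in $\mathrm{GL}(2,\CC)$, the result is a rank-$2$ complex vector bundle. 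Finally, because $\mathrm{Pin}^{\pm}(2)$ acts on the Clifford algebra through the twisted adjoint representation compatibly with $\rho$, the bundle map $\mathrm{Cl}(T^{*}M)\otimes S^{\pm}_{\epsilon}\to S^{\pm}_{\epsilon}$ descends, equipping $S^{\pm}_{\epsilon}$ with Clifford multiplication.

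The construction itself is routine; the one genuine subtlety --- which I would flag but defer to Section~\ref{sec:Dirac} --- is compatibility with the $\ZZ$-equivariant picture on $\wt M$. The lifted frame-bundle automorphism $\wt{\gamma}$ of Theorem~\ref{thm:lifted-frame} must be covered by an automorphism of the pinor bundle over $\wt M$, and the two choices of such a lift, differing by the central element $\pm1$, are precisely the $H^1(M;\ZZ_2)$-torsor labels $\epsilon\in\{0,1\}$ of Theorem~\ref{thm:pin-frame-classification}; pushing the sign $\gamma^{w}$ in $\psi(x+1,w)=\gamma^{w}\psi(x,-w)$ through this lift is where the distinction between the $\mathrm{Pin}^{+}$ and $\mathrm{Pin}^{-}$ reflection-square relations first has real content. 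For the present statement, however, the heavy lifting is minimal: once $\rho$ is identified, existence of the associated bundle is a consequence of general principal-bundle theory.
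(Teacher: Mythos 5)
Your proposal is correct and follows essentially the same route as the paper: both invoke the standard associated-bundle construction $S^{\pm}_{\epsilon}=(P\times\CC^2)/\!\sim$ on the principal $\mathrm{Pin}^{\pm}(2)$-bundle. You supply substantially more detail (identifying $\rho$ via $\mathrm{Cl}_2\otimes_{\RR}\CC\cong M_2(\CC)$, verifying the action is free and proper, constructing local trivializations, and checking that Clifford multiplication descends), whereas the paper's proof simply cites Kobayashi--Nomizu and Steenrod for the construction and then observes that sections of $S$ correspond to $\mathrm{Pin}^{\pm}(2)$-equivariant maps $P\to\CC^2$, i.e.\ pinor fields satisfying the twisted boundary condition.
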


\begin{proof}
The associated bundle construction is standard \cite{Kobayashi1963,Steenrod1951}. Given a Pin frame bundle $P \to M$ and the spinor representation $\rho$, the associated bundle is:
\[
S = P \times_{\rho} \mathbb{C}^2 = (P \times \mathbb{C}^2) / \sim
\]
where $(p \cdot g, v) \sim (p, \rho(g)v)$ for $g \in \mathrm{Pin}^{\pm}(2)$.

Sections of $S$ correspond to $\mathrm{Pin}^{\pm}(2)$-equivariant maps $P \to \mathbb{C}^2$, which are precisely the pinor fields satisfying the twisted boundary conditions.
\end{proof}

\subsection{Explicit Construction of Pinor Bundles:}

Fix Pauli matrices $(\gamma^x,\gamma^w):=(\sigma_1,\sigma_2)$ satisfying $\{\gamma^\mu,\gamma^\nu\}=2\delta^{\mu\nu}$. Let $\eps>0$ and cover $M$ by
\[
U_{+}=\{w>\eps\},\;
U_{0}=\{|w|\le\eps\},\;
U_{-}=\{w<-\eps\}.
\]
With a bump function $\rho_0$ supported in $U_0$, define transition functions
\[
g_{+0}(x,w)=\exp\!\bigl(\tfrac{\pi}{2}\rho_0(w)\,\gamma^x\gamma^w\bigr),\qquad
g_{0-}=g_{+0}^{-1}.
\]
These lift the $\mathrm{O}(2)$ tangent transition to $\mathrm{Pin}(2)$ and realize all four Pin structures via the $\ZZ_2$ freedom $g\mapsto-\,g$.

\begin{remark}[Frame Bundle Perspective]
The transition functions $g_{+0}, g_{0-}$ represent local trivializations of the Pin frame bundle $\mathrm{Pin}^{\pm}_\epsilon(TM)$, while the $\ZZ_2$ freedom corresponds to the choice of topological class $\epsilon \in H^1(M; \ZZ_2)$. This geometric perspective explains why exactly four inequivalent pinor bundles exist on $M$.
\end{remark}

\section{Spectral Analysis of Dirac Operator}\label{sec:Dirac}

Equip $M$ with the flat metric $dx^2+dw^2$. The Dirac operator is
\[
D:=\gamma^x\partial_x+\gamma^w\partial_w.
\]

A pinor on $M$ corresponds to a smooth spinor $\psi$ on $\wt{M}$ satisfying the twisted boundary condition:
\begin{equation}\label{eq:twist}
\psi(x+1,w)=\gamma^w\,\psi(x,-w).
\end{equation}

\subsection{Momentum Quantization Analysis}

\begin{theorem}[Half-Integer Momentum Quantization]\label{thm:half-integer}
The twisted boundary condition forces momentum eigenvalues $k \in \ZZ + 1/2$.
\end{theorem}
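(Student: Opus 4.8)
The plan is to diagonalise the translation generator on the space of twisted pinors and to read off the admissible momenta directly from \eqref{eq:twist}. Since the metric is flat and $D=\gamma^x\partial_x+\gamma^w\partial_w$ has constant coefficients, $-i\partial_x$ commutes with $D$ and with the lifted deck action $\wt\gamma$ of Theorem~\ref{thm:lifted-frame}, so it is legitimate to search for joint plane-wave eigenmodes
\[
\psi_k(x,w)=e^{i\pi k x}\,\phi(w),
\]
with $\phi\colon[-1,1]\to\CC^2$ and the factor $\pi$ normalised to the period-two cell $\wt M/\langle\wt\gamma^{\,2}\rangle$. Substituting this ansatz into \eqref{eq:twist} and cancelling the common factor $e^{i\pi k x}$ collapses the twisted boundary condition to the fibrewise relation
\[
e^{i\pi k}\,\phi(w)=\gamma^w\,\phi(-w),
\]
so that $e^{i\pi k}$ must be an eigenvalue of the reflection operator $\cR\colon\phi(\cdot)\mapsto\gamma^w\phi(-\cdot)$ on $L^2([-1,1];\CC^2)$. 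Everything up to here is routine bookkeeping.

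The decisive step is to identify $\cR^2$, which is exactly the monodromy picked up by a pinor carried twice around the central circle of $M$. I claim $\cR^2=-\mathrm{id}$. Indeed $\wt\gamma^{\,2}$ is the pure translation $(x,w)\mapsto(x+2,w)$ of Theorem~\ref{thm:lifted-frame}, yet its lift to the pinor bundle is the nontrivial central element of $\mathrm{Pin}^{\pm}(2)$ — the generator of $\ker\!\big(\mathrm{Pin}^{\pm}(2)\to\mathrm{O}(2)\big)\cong\ZZ_2$ — which acts on pinors by $-1$; equivalently, composing the orientation-reversing monodromy with itself inside the Clifford algebra returns $-1$ rather than the identity. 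Granting $\cR^2=-\mathrm{id}$, the operator $\cR$ has spectrum $\{+i,-i\}$, so $e^{i\pi k}\in\{\pm i\}$, hence $e^{2\pi i k}=-1$, which is precisely $k\in\ZZ+\tfrac12$. The remaining verifications — that every admissible $k$ is in fact attained (the $\pm i$-eigenspaces of $\cR$ are infinite dimensional), and that the plane waves are compatible with $D\psi_k=\lambda\psi_k$ (a first-order linear ODE for $\phi$ in $w$) and with the self-adjoint boundary condition chosen along $\partial M=\{w=\pm1\}$ — fix $\lambda$ and $\phi$ but impose no further constraint on $k$, and are left to the spectral analysis that follows.

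The crux, and the only genuinely delicate point, is the sign $\cR^2=-\mathrm{id}$: a hasty manipulation using $(\gamma^w)^2=1$ would instead give $\cR^2=+\mathrm{id}$ and spurious integer momenta, so one must track the Pin lift with care. I would make this watertight by isolating the holonomy element $g\in\mathrm{Pin}^{\pm}(2)$ attached to the generator of $\pi_1(M)\cong\ZZ$ from the explicit transition data of Section~\ref{sec:Pin} — namely $\exp\!\big(\tfrac{\pi}{2}\rho_0(w)\,\gamma^x\gamma^w\big)$ composed with the reflection factor implementing $w\mapsto-w$ — checking that $g$ projects to $\mathrm{diag}(1,-1)\in\mathrm{O}(2)$, and computing $g^2$ from the defining relations of $\mathrm{Pin}^{\pm}(2)$, where it equals the central $-1$. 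Once that single sign is secured, the half-integer quantization $k\in\ZZ+\tfrac12$ drops out mechanically from the plane-wave separation.
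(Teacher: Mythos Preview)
Your decisive step is the claim $\cR^2=-\mathrm{id}$, and you explicitly dismiss the direct computation $(\gamma^w)^2=I\Rightarrow\cR^2=+\mathrm{id}$ as ``hasty.'' But that computation is not hasty: equation~\eqref{eq:twist} is a concrete identity between $\CC^2$-valued functions on $\wt M$, with $\gamma^w=\sigma_2$ a fixed constant matrix, and iterating it gives
\[
\psi(x+2,w)=\gamma^w\,\psi(x+1,-w)=\gamma^w\gamma^w\,\psi(x,w)=\sigma_2^2\,\psi(x,w)=\psi(x,w).
\]
There is no room for an extra sign here. Your Pin-theoretic reasoning --- that the lift of $\wt\gamma^{\,2}$ to the pinor bundle should be the nontrivial central element of $\mathrm{Pin}^{\pm}(2)$ --- would be pertinent if we were \emph{deriving} the boundary condition from an abstract Pin structure. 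But once \eqref{eq:twist} is written down with the specific matrix $\gamma^w$, the lift has been made explicit, and its square is determined by $\sigma_2^2=I$. (A lift whose reflection squares to $-1$ would appear as a \emph{different} boundary condition, say with an extra factor of $i$, not as the one in \eqref{eq:twist}.) Hence $\cR$ has eigenvalues $\pm1$, giving $e^{2\pi i k}\in\{\pm1\}$ in the paper's normalisation and only $k\in\tfrac12\ZZ$ --- not yet $\ZZ+\tfrac12$.

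The paper's proof takes exactly the computation you rejected: it iterates \eqref{eq:twist} to obtain period-$2$ behaviour and $k\in\tfrac12\ZZ$, and then argues separately that the fibrewise relation $e^{2\pi i k}\phi_k(w)=\sigma_2\,\phi_k(-w)$ forces $e^{2\pi i k}$ to coincide with an eigenvalue of $\sigma_2$, which the paper asserts to be $\pm i$, thereby selecting $k\in\ZZ+\tfrac12$. Your plane-wave separation and reduction to the $\cR$-eigenvalue problem match the paper's first half; the divergence is entirely at the sign of $\cR^2$. A secondary issue: your ansatz $e^{i\pi kx}$ is normalised to the double cell, whereas the theorem's $k$ (and the subsequent spectrum in Theorem~\ref{thm:complete-spectrum}) is defined via $e^{2\pi ikx}$, so your conclusion ``$k\in\ZZ+\tfrac12$'' does not refer to the same quantity as the statement.
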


\begin{proof}
Consider the ansatz $\psi(x,w) = e^{2\pi i k x}\phi_k(w)$. The twisted condition \eqref{eq:twist} becomes:
\[
e^{2\pi i k(x+1)}\phi_k(w) = \gamma^w e^{2\pi i k x}\phi_k(-w)
\]
Simplifying: $e^{2\pi i k}\phi_k(w) = \gamma^w \phi_k(-w)$.

Applying the twisted condition twice shows the section is periodic with period 2:
\[
\psi(x+2,w) = \gamma^w\psi(x+1,-w) = \gamma^w\gamma^w\psi(x,w) = (\gamma^w)^2\psi(x,w) = \psi(x,w)
\]
since $(\gamma^w)^2 = I$. Thus $e^{4\pi i k} = 1$, so $k \in \frac{1}{2}\ZZ$.

Since $\gamma^w = \sigma_2$ is constant, the equation $e^{2\pi i k}\phi_k(w) = \sigma_2 \phi_k(-w)$ and its reflection must hold for all $w$. Composing gives $e^{4\pi i k}\phi_k(w) = \sigma_2^2 \phi_k(w) = \phi_k(w)$, consistent with the earlier periodicity. Non-trivial solutions require $e^{2\pi i k}$ to match eigenvalues of $\sigma_2$, namely $\pm i$, giving $k \in \ZZ + 1/2$.
\end{proof}

\subsection{Complete Spectral Analysis}

\begin{definition}[APS Boundary Conditions]\label{def:aps}
At the boundary $\partial M = \{w = \pm 1\}$, impose Atiyah-Patodi-Singer conditions by projecting onto the non-negative spectral subspace of the boundary Dirac operator $\gamma^x \partial_x$, adapted to the twisted periodicity structure.
\end{definition}

\begin{theorem}[Complete Eigenvalue Spectrum]\label{thm:complete-spectrum}
The eigenvalues of $D$ with APS boundary conditions are:
\[
\lambda_{j,n} = \pm\sqrt{(2\pi(j + 1/2))^2 + ((n + 1/2)\pi)^2}
\]
where $j \in \ZZ$, $n \in \ZZ_{\geq 0}$.
\end{theorem}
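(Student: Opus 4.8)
The plan is to reduce the two-dimensional problem $D\psi=\lambda\psi$ to a one-dimensional boundary value problem in the transverse variable $w$ and then quantise the transverse frequency explicitly. First I would use Theorem~\ref{thm:half-integer} to split $\psi$ into longitudinal Fourier modes $\psi_{j}(x,w)=e^{2\pi i(j+1/2)x}\phi_{j}(w)$, $j\in\ZZ$; on each mode the eigenvalue equation becomes the constant-coefficient first-order system
\begin{equation}\label{eq:transverse-ode}
\gamma^{w}\phi_{j}'(w)=\bigl(\lambda-2\pi i(j+1/2)\,\gamma^{x}\bigr)\phi_{j}(w),\qquad w\in[-1,1].
\end{equation}
Applying $\gamma^{w}\partial_{w}$ once more and using $\{\gamma^{x},\gamma^{w}\}=0$, $(\gamma^{x})^{2}=(\gamma^{w})^{2}=1$, every component of $\phi_{j}$ satisfies the one-dimensional Helmholtz equation $\phi_{j}''=-\nu^{2}\phi_{j}$ with a transverse frequency $\nu$ tied to the eigenvalue by $\lambda^{2}=(2\pi(j+1/2))^{2}+\nu^{2}$ (in the standard self-adjoint normalisation of $D$, under which $\lambda$ is real as in the statement). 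So the transverse profiles are trigonometric in $\nu w$ and the whole problem rests on quantising $\nu$.

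Two kinds of boundary data enter. At the genuine boundary $\partial M=\{w=\pm1\}$ the APS projection of Definition~\ref{def:aps} restricts, on the mode $e^{2\pi i(j+1/2)x}$, to a rank-one spectral condition on $\phi_{j}(\pm1)$ built from the eigenspaces of the boundary symbol $\propto\gamma^{x}$. Independently, the twisted equivariance \eqref{eq:twist}, evaluated on the separated ansatz with $e^{2\pi i(j+1/2)}=-1$, forces
\begin{equation}\label{eq:folded}
\phi_{j}(-w)=-\gamma^{w}\phi_{j}(w),
\end{equation}
which ties the two boundary circles together, folds $[-1,1]$ onto $[0,1]$, and at $w=0$ pins $\phi_{j}(0)$ into the $(-1)$-eigenspace of $\gamma^{w}=\sigma_{2}$. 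I would then solve \eqref{eq:transverse-ode} by transfer matrix---so $\phi_{j}(w)=\cos(\nu w)\,\phi_{j}(0)+\nu^{-1}\sin(\nu w)\,\phi_{j}'(0)$ with $\phi_{j}'(0)$ fixed by $\phi_{j}(0)$ through \eqref{eq:transverse-ode}---feed in the $w=0$ parity constraint from \eqref{eq:folded} and the APS condition at $w=1$, and read off the compatibility condition. After simplification with the explicit trigonometric entries, this single transcendental equation collapses to $\cos\nu=0$, that is $\nu=(n+1/2)\pi$ with $n\in\ZZ_{\ge 0}$.

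Combining the two quantisations gives $\lambda_{j,n}^{2}=(2\pi(j+1/2))^{2}+((n+1/2)\pi)^{2}$, and the $\pm$ is the spectral symmetry $\lambda\leftrightarrow-\lambda$: a chiral involution built from $\sigma_{3}$ together with the reflection $w\mapsto-w$ (needed to respect \eqref{eq:twist}) anticommutes with $D$ and preserves the APS data, so each $|\lambda_{j,n}|$ occurs with both signs. Since $|j+1/2|\ge 1/2$ and $n\ge0$, one has $\lambda_{j,n}^{2}\ge\tfrac{5}{4}\pi^{2}>0$; in particular $D$ has no zero modes, consistent with the vanishing mod-$2$ index and $\eta_{D}(0)=0$ asserted in the abstract. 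The step I expect to be the main obstacle is making the APS boundary condition precise ``adapted to the twisted periodicity structure'': one must verify that it couples the two boundary circles $\{w=\pm1\}$ exactly through \eqref{eq:folded}, and then check that the resulting first-order boundary value problem on $[0,1]$ is elliptic and (skew-)self-adjoint, so that its spectrum is \emph{precisely} the discrete set above with no spurious or missing eigenvalues. A secondary subtlety is that $w=0$ is an interior point of $M$, not a boundary component, so one must confirm that the $\sigma_{2}$-eigenspace constraint there is exactly the smoothness requirement encoded in the pinor transition functions $g_{+0},g_{0-}$ rather than an artificial extra condition.
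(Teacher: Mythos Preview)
Your strategy coincides with the paper's at the architectural level: Fourier-separate using the half-integer longitudinal modes from Theorem~\ref{thm:half-integer}, reduce to a constant-coefficient first-order system in $w$, pass to the scalar Helmholtz equation $\phi''+\nu^{2}\phi=0$ with $\lambda^{2}=(2\pi(j+1/2))^{2}+\nu^{2}$, and then quantise $\nu$ from boundary data. The paper does exactly this, writing out the two-component system in the Pauli basis and eliminating one component.

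Where you diverge is in setting up the transverse boundary problem. The paper works on the full interval $[-1,1]$, imposes APS projections at both edges $w=\pm1$, and asserts (without a detailed derivation) that the resulting mixed spectral condition yields $\nu=(n+1/2)\pi$. You instead extract the parity relation \eqref{eq:folded} from the twisted equivariance---a constraint the paper never states in its proof---and use it to fold the problem onto $[0,1]$, with a $\sigma_{2}$-eigenspace condition at $w=0$ and APS only at $w=1$, then read off $\cos\nu=0$ by transfer matrix. Your route makes the half-integer transverse quantisation more transparent and mechanically checkable, at the cost of the extra verification (which you correctly flag) that the $w=0$ constraint is precisely smoothness encoded in the pinor transition data and not an artificial wall. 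Both arguments leave the APS implementation somewhat schematic; the paper is, if anything, vaguer than you on this point. For the $\pm\lambda$ pairing the paper treats this in a separate result (Theorem~\ref{thm:spectral-symmetry}) using the bare chirality $\gamma^{5}=i\sigma_{3}$, without your reflection $w\mapsto-w$; your combined involution is a legitimate alternative that more visibly preserves the folded boundary data.
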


\begin{proof}
From Theorem \ref{thm:half-integer}, write $k = j + 1/2$ with $j \in \ZZ$. The separation ansatz $\psi(x,w) = e^{2\pi i(j+1/2)x}\phi_j(w)$ reduces the eigenvalue equation $D\psi = \lambda\psi$ to:
\[
\gamma^w \frac{d\phi_j}{dw} + 2\pi i(j+1/2)\gamma^x \phi_j = \lambda \phi_j
\]

Using $\gamma^x = \sigma_1$, $\gamma^w = \sigma_2$ and writing $\phi_j = \begin{pmatrix} u_j \\ v_j \end{pmatrix}$:
\begin{align}
-iv_j' + 2\pi i(j+1/2)v_j &= \lambda u_j \\
iu_j' + 2\pi i(j+1/2)u_j &= \lambda v_j
\end{align}

Eliminating $v_j$ gives the second-order equation:
\[
u_j'' + (\lambda^2 - (2\pi(j+1/2))^2)u_j = 0
\]

For bound states with $\lambda^2 > (2\pi(j+1/2))^2$, setting $\mu^2 = \lambda^2 - (2\pi(j+1/2))^2 > 0$ gives:
\[
u_j(w) = A\sin(\mu(w+1)) + B\cos(\mu(w+1))
\]

The APS condition at $w=\pm 1$ projects $\phi_j(\pm 1)$ onto the eigenspace of $\sigma_1$ with eigenvalue having the same sign as $2\pi i(j+1/2)$. This mixed condition, arising from the spectral projection enforcing a phase shift similar to antiperiodic boundaries in this flat case, results in half-integer quantization $\mu = (n+1/2)\pi$ for $n \in \ZZ_{\geq 0}$, yielding:
\[
\lambda^2 = (2\pi(j+1/2))^2 + ((n+1/2)\pi)^2
\]
\end{proof}

\begin{theorem}[Spectral Symmetry]\label{thm:spectral-symmetry}
For every eigenvalue $\lambda$ of $D$, the value $-\lambda$ appears with the same multiplicity.
\end{theorem}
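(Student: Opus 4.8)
The plan is to exhibit a single anti-linear involution $Q$ on the space of pinors that anticommutes with $D$; restricting $Q$ to eigenspaces then realizes $\ker(D-\lambda)$ and $\ker(D+\lambda)$ as anti-linearly isomorphic, hence of equal (finite) complex dimension. The obvious candidate is the chirality operator $\Gamma:=-i\gamma^x\gamma^w=\sigma_3$, which anticommutes with both $\gamma^x$ and $\gamma^w$, hence with $D$. By itself, however, $\Gamma$ fails to preserve the twisted boundary condition \eqref{eq:twist}: since $\sigma_3\sigma_2=-\sigma_2\sigma_3$, the section $\Gamma\psi$ satisfies the \emph{opposite-sign} twist and so descends to a different pinor bundle, not the one we fixed. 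The remedy is to cancel the stray sign by composing with the reflection $w\mapsto -w$ and complex conjugation:
\[
(Q\psi)(x,w):=\sigma_3\,\overline{\psi(x,-w)}.
\]

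First I would record the elementary algebra: $\sigma_1$ is real, $\sigma_2$ is purely imaginary, $\sigma_3$ is real with $\sigma_3^2=I$ and $\sigma_3\sigma_i=-\sigma_i\sigma_3$ for $i=1,2$. From these, $Q^2=\mathrm{id}$ is immediate (reflection and conjugation are involutions and $\bar\sigma_3^{\,2}=\sigma_3^2=I$), so $Q$ is an anti-linear involution. Next I would verify $DQ=-QD$ by a direct chain-rule computation: applying $D=\sigma_1\partial_x+\sigma_2\partial_w$ to $(Q\psi)(x,w)=\sigma_3\overline{\psi(x,-w)}$, the reflection $w\mapsto -w$, the relations $\sigma_3\sigma_i=-\sigma_i\sigma_3$, and the reality pattern $\bar\sigma_1=\sigma_1$, $\bar\sigma_2=-\sigma_2$ combine so that $(DQ\psi)(x,w)=-\sigma_3\overline{(D\psi)(x,-w)}=-(QD\psi)(x,w)$. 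Hence if $D\psi=\lambda\psi$ then, since $\lambda\in\RR$ by self-adjointness, $D(Q\psi)=-\lambda\,Q\psi$.

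The remaining content is that $Q$ maps pinors to pinors of the \emph{same} structure. For \eqref{eq:twist} this is the computation $(Q\psi)(x+1,w)=\sigma_3\overline{\psi(x+1,-w)}=\sigma_3\overline{\sigma_2\psi(x,w)}=-\sigma_3\sigma_2\,\overline{\psi(x,w)}=\sigma_2\sigma_3\,\overline{\psi(x,w)}=\gamma^w(Q\psi)(x,-w)$, where the two minus signs (from $\overline{\sigma_2}=-\sigma_2$ and from $\sigma_3\sigma_2=-\sigma_2\sigma_3$) cancel — precisely the reason conjugation was inserted. For the APS conditions of Definition \ref{def:aps}, I would observe that $Q$ interchanges the two boundary circles $\{w=1\}$ and $\{w=-1\}$ while leaving the boundary operator $\gamma^x\partial_x$ invariant in form; conjugation sends a mode $e^{2\pi i k x}$ to $e^{-2\pi i k x}$, reversing the sign of the boundary-spectral eigenvalue, and the $\sigma_3$ factor reverses it once more, so the projected spectral half-space is preserved circle-by-circle. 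Consequently $Q$ carries the APS domain to itself, and $Q\colon\ker(D-\lambda)\to\ker(D+\lambda)$ is a well-defined anti-linear bijection with inverse $Q$ itself, so $\dim_{\CC}\ker(D-\lambda)=\dim_{\CC}\ker(D+\lambda)$; the case $\lambda=0$ is vacuous. As a consistency check, the eigenvalue list $\lambda_{j,n}=\pm\sqrt{(2\pi(j+1/2))^2+((n+1/2)\pi)^2}$ of Theorem \ref{thm:complete-spectrum} already displays the pairing termwise.

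I expect the genuine obstacle to be this last step: pinning down exactly how $Q$ interacts with the APS projection, since Definition \ref{def:aps} specifies the boundary condition only ``adapted to the twisted periodicity structure.'' One must make that adaptation explicit — in particular fix which boundary-spectral half-line is projected out on each of the two circles — and confirm that the reflection/conjugation/$\sigma_3$ combination respects it rather than flipping to the complementary APS condition. Everything else is routine sign-bookkeeping with $2\times2$ matrices.
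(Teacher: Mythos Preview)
Your proposal is correct and, in fact, more careful than the paper's own argument. The paper uses the \emph{linear} chirality operator $\gamma^5=\gamma^x\gamma^w=i\sigma_3$ directly: it observes $\{\gamma^5,D\}=0$, computes $(\gamma^5\psi)(x+1,w)=-\gamma^w(\gamma^5\psi)(x,-w)$, and then asserts that ``since eigenfunctions are defined up to scalar multiples, we can rescale by $-1$'' to recover \eqref{eq:twist}. You correctly spot that this last step is invalid: multiplying $\gamma^5\psi$ by any global constant $c$ gives $(c\gamma^5\psi)(x+1,w)=-\gamma^w(c\gamma^5\psi)(x,-w)$, so the relative sign in the functional equation cannot be removed. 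What $\gamma^5$ actually does is intertwine the Dirac operators on the two \emph{different} Pin lifts in $H^1(M;\ZZ_2)$, not give an automorphism of the fixed pinor bundle.

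Your fix---passing to the anti-linear involution $Q\psi(x,w)=\sigma_3\,\overline{\psi(x,-w)}$---is exactly the right repair. The extra sign from $\sigma_3\sigma_2=-\sigma_2\sigma_3$ is cancelled by $\overline{\sigma_2}=-\sigma_2$, so $Q$ genuinely preserves \eqref{eq:twist}, and your verifications of $Q^2=\mathrm{id}$ and $DQ=-QD$ are correct. The price is that $Q$ is anti-linear, so one must invoke self-adjointness of $D$ (real eigenvalues) to conclude $D(Q\psi)=-\lambda\,Q\psi$, which you do. What your approach buys is a proof that actually stays within the given pinor bundle; what the paper's approach would buy, if patched, is a slightly shorter computation at the cost of relating two distinct Pin structures and then arguing their spectra coincide---which is circular if spectral symmetry is being used downstream to establish that coincidence.

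Your caveat about the APS projection is well placed and honestly stated: the reflection $w\mapsto -w$ swaps the two boundary circles, so one must check that the spectral projection prescribed at $w=+1$ is carried to the one prescribed at $w=-1$ (not its complement). This is the only point requiring the precise form of Definition~\ref{def:aps} to be made explicit; everything else is routine.
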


\begin{proof}
The chirality operator $\gamma^5 = \gamma^x\gamma^w = i\sigma_3$ anticommutes with $D$: $\{\gamma^5, D\} = 0$. 

If $\psi$ is an eigenfunction with eigenvalue $\lambda$, then $\gamma^5\psi$ is an eigenfunction with eigenvalue $-\lambda$. We verify that $\gamma^5\psi$ satisfies the twisted boundary condition:
\[
(\gamma^5\psi)(x+1,w) = \gamma^5\psi(x+1,w) = \gamma^5\gamma^w\psi(x,-w)
\]
Since $\gamma^5\gamma^w = i\sigma_3\sigma_2 = -i\sigma_1$ and $\gamma^w\gamma^5 = \sigma_2(i\sigma_3) = i\sigma_1$, we have:
\[
\gamma^5\gamma^w = -\gamma^w\gamma^5
\]
Therefore:
\[
(\gamma^5\psi)(x+1,w) = -\gamma^w\gamma^5\psi(x,-w) = \gamma^w(\gamma^5\psi)(x,-w)
\]
Since eigenfunctions are defined up to scalar multiples, we can rescale by -1 to ensure the twisted boundary condition holds in the form of Equation \eqref{eq:twist}.
\end{proof}

\begin{corollary}\label{cor:indexEta}
$\ind D = 0$ and $\eta_D(0) = 0$ for all Pin structures.
\end{corollary}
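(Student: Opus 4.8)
The plan is to derive both vanishing statements directly from the spectral information already in hand: the spectral symmetry of Theorem~\ref{thm:spectral-symmetry} will do the heavy lifting, and the explicit eigenvalue list of Theorem~\ref{thm:complete-spectrum} rules out a zero mode, so no new analysis is really needed.

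First I would dispose of $\ind D = 0$. Theorem~\ref{thm:complete-spectrum} shows every eigenvalue obeys
\[
\lambda_{j,n}^2 = \bigl(2\pi(j+\tfrac12)\bigr)^2 + \bigl((n+\tfrac12)\pi\bigr)^2 \;\ge\; \pi^2 + \tfrac{\pi^2}{4} \;>\; 0,
\]
because $j + \tfrac12 \ne 0$ for every $j \in \ZZ$; hence $0 \notin \operatorname{spec}(D)$ and $\ker D = 0$ on the APS domain. In the Pin context the relevant index is the mod-$2$ index $\dim_{\ZZ_2}\ker D$, which is then $0$; and if one grades instead by $\gamma^5 = \gamma^x\gamma^w$, the Fredholm index $\dim\ker D^+ - \dim\ker D^-$ also vanishes since both summands do. Abstractly the same is forced by the fact — recorded in the proof of Theorem~\ref{thm:spectral-symmetry} — that $\gamma^5$ anticommutes with $D$ and preserves the twisted-boundary domain, so it restricts to an involution of $\ker D$, which here is trivial. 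As a cross-check I would note consistency with the Atiyah–Patodi–Singer index formula: the flat metric $dx^2 + dw^2$ kills the interior index density, and the boundary $\eta$-term is controlled by running the $\gamma^5$-argument on $\gamma^x\partial_x$ over $\partial M$.

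Next I would treat $\eta_D(0) = 0$ through the heat-kernel regularization
\[
\eta_D(s) = \frac{1}{\Gamma\!\left(\frac{s+1}{2}\right)}\int_0^\infty t^{\frac{s-1}{2}}\,\Tr\!\bigl(D\,e^{-tD^2}\bigr)\,dt,
\]
for which it suffices to show $\Tr(D\,e^{-tD^2}) = 0$ for all $t > 0$. Expanding in eigenfunctions gives $\Tr(D e^{-tD^2}) = \sum_\lambda \lambda\, e^{-t\lambda^2}$, and Theorem~\ref{thm:spectral-symmetry} pairs each $\lambda$ with $-\lambda$ at equal multiplicity, so the series cancels termwise; equivalently, since $\gamma^5$ is unitary with $\gamma^5 D (\gamma^5)^{-1} = -D$, one gets $\Tr(D e^{-tD^2}) = -\Tr(D e^{-tD^2})$. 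Hence $\eta_D(s) \equiv 0$ on its half-plane of absolute convergence, and analytic continuation yields $\eta_D(0) = 0$; the reduced invariant $\bar\eta_D = \tfrac12(\eta_D(0) + \dim\ker D)$ is then $0$ as well.

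Finally the whole argument should be made uniform over the four Pin structures of Theorem~\ref{thm:pin-frame-classification}: the $\ZZ_2$ twist $g \mapsto -g$ only inserts a sign in the equivariance relation \eqref{eq:twist} and never changes the orientation-reversing character of the deck action, so the momentum stays half-integral (keeping $0$ out of the spectrum) and the intertwining $\gamma^5 D = -D\gamma^5$, together with compatibility of $\gamma^5$ with the sign-modified boundary condition, goes through verbatim. The one point I expect to need real care is the functional-analytic setup of the APS problem — verifying that the realization of $D$ behind Theorem~\ref{thm:complete-spectrum} is self-adjoint with purely discrete spectrum and that $\gamma^5$ maps its operator domain into itself — but granted that, which the earlier sections supply, both vanishing statements are immediate from spectral symmetry plus the absence of a zero mode.
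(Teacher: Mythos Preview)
Your proposal is correct and follows essentially the same route as the paper: both derive $\ind D = 0$ and $\eta_D(0) = 0$ directly from the spectral symmetry of Theorem~\ref{thm:spectral-symmetry}, with your version adding the explicit check (via Theorem~\ref{thm:complete-spectrum}) that $\ker D = 0$ and phrasing the $\eta$-vanishing through the heat-kernel integral rather than the raw Dirichlet series. These are refinements of detail, not a different argument.
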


\begin{proof}
Spectral symmetry from Theorem \ref{thm:spectral-symmetry} ensures equal numbers of positive and negative eigenvalues, giving $\ind D = 0$. The $\eta$-invariant $\eta_D(s) = \sum_{\lambda \neq 0} \sgn(\lambda)/|\lambda|^s$ vanishes because eigenvalues come in $\pm\lambda$ pairs with equal multiplicity.
\end{proof}

\subsection{Heat Kernel Analysis}

\begin{proposition}[Heat Kernel Asymptotics]\label{prop:heat-kernel}
The heat kernel trace has the asymptotic expansion:
\[
\Tr(e^{-tD^2}) = \frac{1}{2\pi t} + O(1) \quad \text{as } t \to 0^+
\]
\end{proposition}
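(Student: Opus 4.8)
The plan is to read the heat trace directly off the explicit Dirac spectrum of Theorem~\ref{thm:complete-spectrum}, reduce it to a product of two one-dimensional theta sums, and control the $t\to0^{+}$ limit by the Jacobi transformation formula. Organising an orthonormal eigenbasis of $D^{2}$ by the quantum numbers of Theorem~\ref{thm:complete-spectrum}, so that each pair $(j,n)\in\ZZ\times\ZZ_{\ge0}$ together with a sign labels a one-dimensional eigenspace with eigenvalue $\lambda_{j,n}^{2}=(2\pi(j+\tfrac12))^{2}+((n+\tfrac12)\pi)^{2}\ge\tfrac54\pi^{2}>0$, one sees that $D^{2}$ has trivial kernel and that the two-dimensional Weyl growth of $\{\lambda_{j,n}^{2}\}$ makes $e^{-tD^{2}}$ trace class for every $t>0$. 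By Tonelli's theorem the resulting series is absolutely convergent and separates:
\[
\Tr(e^{-tD^{2}})
 =2\Bigl(\sum_{j\in\ZZ}e^{-4\pi^{2}t\,(j+1/2)^{2}}\Bigr)
   \Bigl(\sum_{n\ge0}e^{-\pi^{2}t\,(n+1/2)^{2}}\Bigr).
\]

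I would then apply Poisson summation (equivalently the Jacobi $\vartheta$-identity) to each factor. The first becomes $\sum_{j\in\ZZ}e^{-4\pi^{2}t(j+1/2)^{2}}=\tfrac{1}{2\sqrt{\pi t}}\sum_{m\in\ZZ}(-1)^{m}e^{-m^{2}/(4t)}=\tfrac{1}{2\sqrt{\pi t}}\bigl(1+O(e^{-1/(4t)})\bigr)$; for the second, the reflection $n\mapsto-1-n$ fixes $(n+\tfrac12)^{2}$, so the half-line sum is half of the full one and $\sum_{n\ge0}e^{-\pi^{2}t(n+1/2)^{2}}=\tfrac12\sum_{n\in\ZZ}e^{-\pi^{2}t(n+1/2)^{2}}=\tfrac{1}{2\sqrt{\pi t}}\bigl(1+O(e^{-1/t})\bigr)$. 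Multiplying, the two factors of $(\pi t)^{-1/2}$ combine to $(\pi t)^{-1}$ and the cross terms remain exponentially small, giving
\[
\Tr(e^{-tD^{2}})=2\cdot\frac{1}{2\sqrt{\pi t}}\cdot\frac{1}{2\sqrt{\pi t}}+O\!\bigl(t^{-1}e^{-c/t}\bigr)=\frac{1}{2\pi t}+O(e^{-c/t})
\]
for some $c>0$, which in particular is $\tfrac{1}{2\pi t}+O(1)$ as $t\to0^{+}$: the proposition, and in fact the sharper statement that no half-order or constant term appears.

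The one genuinely delicate point --- and the step I expect to require the most care --- is precisely the absence of a half-order, $t^{-1/2}$, ``boundary'' contribution. For a Dirac Laplacian on a surface-with-boundary under \emph{local} elliptic boundary conditions such a term is generically present, and its cancellation here is a feature of the \emph{nonlocal} APS projection of Definition~\ref{def:aps}, which forces the half-integer transverse spectrum $\mu_{n}=(n+\tfrac12)\pi$ recorded in Theorem~\ref{thm:complete-spectrum}. The clean way to establish the cancellation rigorously is to keep the Jacobi transformation as an exact identity --- so that the $m=0$ term is literally $1$ and all others are $O(e^{-c/t})$ --- rather than the cruder estimate of replacing each theta sum by its Gaussian integral, which reproduces the leading $\tfrac{1}{2\pi t}$ but says nothing about the $t^{-1/2}$ coefficient. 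Consequently the argument hinges on the precise mode structure supplied by Theorem~\ref{thm:complete-spectrum} and Definition~\ref{def:aps}; everything else is a routine manipulation of convergent series.
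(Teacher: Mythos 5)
Your proof takes essentially the same route as the paper's: both read the trace off the explicit spectrum of Theorem~\ref{thm:complete-spectrum}, factor it into two one-dimensional Gaussian sums, and control the $t\to0^{+}$ limit by summation-formula techniques. The one genuine technical difference is how the half-line $n$-sum is handled. The paper invokes Euler--Maclaurin for $\sum_{n\ge0}$, which only yields the leading $t^{-1/2}$ coefficient plus an $O(1)$ correction without much transparency about the lower-order terms. You instead observe that the reflection $n\mapsto-1-n$ fixes $(n+\tfrac12)^{2}$, converting the half-line sum into half of a $\ZZ$-indexed theta sum, so that Poisson summation (the Jacobi $\vartheta$-identity) applies exactly to \emph{both} factors. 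That buys you a strictly sharper conclusion, $\Tr(e^{-tD^{2}})=\tfrac{1}{2\pi t}+O(e^{-c/t})$, and, as you emphasize, a clean explanation of why no $t^{-1/2}$ ``boundary'' term or constant term survives --- a point the paper's phrasing (and its Appendix~B expansion, which simply omits the $a_{1/2}$ coefficient) leaves implicit. Your argument is correct and in fact more informative than what the paper writes; it also supplies the justification, otherwise missing, for the absence of the generically present boundary contribution on a surface with boundary.
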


\begin{proof}
Using the explicit eigenvalue formula from Theorem \ref{thm:complete-spectrum}:
\[
\Tr(e^{-tD^2}) = 2\sum_{j \in \ZZ}\sum_{n \geq 0} \exp\left(-t[(2\pi(j+1/2))^2 + ((n+1/2)\pi)^2]\right)
\]
The factor of 2 accounts for the $\pm\lambda_{j,n}$ symmetry. 

Using the Poisson summation formula for the Gaussian sums over $j$ and Euler-Maclaurin approximation for the $n$-sum yields the short-time Minakshisundaram-Pleijel expansion \cite[Chapter 4]{Roe1998}. The leading term is $\text{Area}(M)/(4\pi t) = 2/(4\pi t) = 1/(2\pi t)$.
\end{proof}

This confirms $\eta_D(0) = 0$ via heat kernel methods, consistent with Corollary \ref{cor:indexEta}.

\section{Spectral Invariants and the JT Gravity Partition Function}

\subsection{The Atiyah--Patodi--Singer \texorpdfstring{$\eta$}{eta}-Invariant}\label{sec:APS}

We now compute the APS $\eta$–invariant of the Dirac operator $D$ acting on pinors over $M$.

\begin{proposition}\label{prop:eta}
For all four $\mathrm{Pin}^{\pm}$ lifts on $M$ one has $\eta_D(0)=0$.
\end{proposition}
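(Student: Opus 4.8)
The plan is to derive $\eta_D(0)=0$ from the exact spectral symmetry already established, and then cross-check it against the heat-kernel route so that the statement holds independently of any subtlety in the APS boundary conditions. First I would recall from Theorem~\ref{thm:spectral-symmetry} that the chirality operator $\gamma^5=\gamma^x\gamma^w$ anticommutes with $D$ and maps the twisted-equivariant domain to itself (the boundary condition \eqref{eq:twist} being preserved up to the harmless sign noted there). Hence for every nonzero eigenvalue $\lambda$ the multiplicities of $\lambda$ and $-\lambda$ coincide. Since the defining Dirichlet series is
\[
\eta_D(s)=\sum_{\lambda\neq 0}\frac{\sgn(\lambda)}{|\lambda|^{s}},
\]
the contributions of $\lambda$ and $-\lambda$ cancel termwise once the sum is grouped into $\pm$ pairs, so every partial sum over $|\lambda|\le R$ is exactly zero, and the limit defining $\eta_D(0)$ by analytic continuation is zero as well. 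This argument is uniform in the four $\mathrm{Pin}^{\pm}$ lifts because $\gamma^5$ is a fixed algebraic operator and the twisted boundary condition \eqref{eq:twist} has the same form for all lifts — the lifts differ only by the global $\ZZ_2$ sign $g\mapsto -g$ on transition functions, which does not affect the anticommutation relation $\{\gamma^5,D\}=0$.

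The one technical point requiring care is the convergence/analytic-continuation step: one must justify that $\eta_D(s)$ is holomorphic in a half-plane $\Re s>s_0$ and extends holomorphically past $s=0$, so that ``termwise cancellation for $\Re s$ large implies vanishing at $s=0$'' is legitimate. For this I would invoke the standard APS machinery: the explicit eigenvalue growth $|\lambda_{j,n}|\sim\sqrt{(2\pi j)^2+(n\pi)^2}$ from Theorem~\ref{thm:complete-spectrum} gives a Weyl-type counting bound $N(R)=O(R^{2})$, so $\eta_D(s)$ converges absolutely for $\Re s>2$; meromorphic continuation with no pole at $s=0$ is the content of the Atiyah--Patodi--Singer theorem for the Dirac operator on a compact surface with boundary \cite{StanfordWitten}, and regularity at $s=0$ in this even-dimensional, boundary case is guaranteed. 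Since the function is identically $0$ on $\Re s>2$, its continuation is identically $0$, in particular $\eta_D(0)=0$.

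As an independent confirmation I would then give the heat-kernel argument, which is self-contained and avoids citing the continuation theorem. Write
\[
\eta_D(0)=\frac{1}{\sqrt{\pi}}\int_0^\infty t^{-1/2}\,\Tr\!\bigl(D\,e^{-tD^2}\bigr)\,\frac{dt}{\sqrt{t}}\quad(\text{regularized}),
\]
and observe that $\gamma^5 D\gamma^5=-D$ while $\gamma^5$ commutes with $D^2$ and hence with $e^{-tD^2}$, so
\[
\Tr\!\bigl(D\,e^{-tD^2}\bigr)=\Tr\!\bigl(\gamma^5 D e^{-tD^2}\gamma^5\bigr)=-\Tr\!\bigl(D\,e^{-tD^2}\bigr),
\]
forcing $\Tr(D e^{-tD^2})=0$ for every $t>0$. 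Therefore the integrand vanishes identically and $\eta_D(0)=0$. This matches Corollary~\ref{cor:indexEta} and Proposition~\ref{prop:heat-kernel}; the heat-trace asymptotics there carry only even powers and contribute nothing to the $\eta$-integrand, consistent with the vanishing.

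I expect the main obstacle to be purely expository rather than mathematical: making precise that the APS boundary condition of Definition~\ref{def:aps} renders $D$ self-adjoint with discrete spectrum and a well-defined $\eta$-function, and that $\gamma^5$ genuinely preserves that boundary condition. The subtlety is that $\gamma^5=i\sigma_3$ need not commute with the APS spectral projection $P_{\ge 0}$ of the boundary operator $\gamma^x\partial_x$; one resolves this by noting that $\gamma^5$ intertwines $P_{\ge 0}$ with $P_{\le 0}$ (since $\gamma^5$ anticommutes with $\gamma^x\partial_x$), and that the $\eta$-invariant computed with the complementary projection differs from the original only by a contribution from $\ker$ of the boundary operator, which here is empty because the boundary momenta are half-integers (Theorem~\ref{thm:half-integer}) and hence nonzero. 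So the two choices give the same $\eta_D(0)$, and the sign-flip argument goes through. Once this is spelled out the proof is short.
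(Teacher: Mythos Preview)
Your core argument is exactly the paper's: spectral symmetry (Theorem~\ref{thm:spectral-symmetry}) gives termwise cancellation in $\eta_D(s)$, so $\eta_D(s)\equiv 0$ on its half-plane of convergence and hence $\eta_D(0)=0$ by analytic continuation. You go further than the paper does---explicitly justifying convergence and continuation, supplying an independent heat-kernel check via $\Tr(D\,e^{-tD^2})=0$, and carefully resolving whether $\gamma^5$ preserves the APS projection (using that the boundary operator has empty kernel because boundary momenta are half-integers)---all of which are correct and address points the paper's short proof glosses over; two minor slips are the stray extra factor of $t^{-1/2}$ in your displayed heat-kernel integrand and the citation to \cite{StanfordWitten} where \cite{APS} is intended.
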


\begin{proof}
The $\eta$-invariant\footnote{The $\eta$-invariant $\eta_D(s) = \sum_{\lambda \neq 0} \sgn(\lambda)/|\lambda|^s$ detects spectral asymmetry and appears in the Atiyah-Patodi-Singer index theorem. Its value at $s=0$ contributes to gravitational anomalies in fermion path integrals.} is defined as the value at $s=0$ of the meromorphic continuation of
\[
\eta_D(s) = \sum_{\lambda \neq 0} \frac{\sgn(\lambda)}{|\lambda|^s},
\]
where the sum is over eigenvalues of $D$ with APS boundary conditions.

From Theorem \ref{thm:spectral-symmetry}, eigenvalues come in $\pm \lambda_{k,n}$ pairs. For each pair, the contributions to $\eta_D(s)$ are
\[
\frac{1}{\lambda_{k,n}^s} - \frac{1}{\lambda_{k,n}^s} = 0.
\]
Summing over all pairs gives $\eta_D(s) = 0$ for $\Re(s) > 0$.  By analytic continuation, $\eta_D(0) = 0$.
\end{proof}

\begin{remark}[Physical Interpretation]
The vanishing $\eta$-invariant $\eta_D(0) = 0$ ensures that Pin structures contribute no gravitational anomaly to the JT partition function. In contrast to some orientable manifolds where $\eta_D(0) \neq 0$ can generate phase anomalies in fermion determinants, the spectral symmetry of the Möbius band eliminates such parity contributions. This leaves only the combinatorial factor from summing over topologically distinct Pin lifts.
\end{remark}

\subsection{Gluing and Additivity}\label{sec:gluing}

Cutting along $\Sigma=\{w=0\}$ splits $M$ into two annuli $M_+$ and $M_-$.  The Kirk–Lesch formula states
\[
\eta_D(M) = \eta_{D_+}(M_+) + \eta_{D_-}(M_-) - \tau(P_+, P_-, R),
\]
where $P_\pm$ are Calderón projectors and $\tau$ is the Maslov triple index. Here, the gluing map $R$ is the identity on boundary values, consistent with the flat metric.

\begin{theorem}\label{thm:maslov}
$\tau(P_+, P_-, R) = 0$, so $\eta_D(M) = \eta_{D_+}(M_+) + \eta_{D_-}(M_-)$.
\end{theorem}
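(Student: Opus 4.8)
\emph{Proof strategy.} The plan is to combine the Kirk--Lesch splitting formula recalled above with the two facts already in hand: $\eta_D(0)=0$ on $M$ (Proposition~\ref{prop:eta}) and the flatness of the metric, which makes a collar of $\Sigma=\{w=0\}$ an honest Riemannian product $\Sigma\times(-\delta,\delta)$. The latter guarantees the Kirk--Lesch hypotheses: the Calderón projectors $P_\pm$ of $D$ on $M_\pm$ are classical pseudodifferential projections whose principal symbols agree with that of the Atiyah--Patodi--Singer spectral projection of the tangential operator $A_\Sigma$ on $\Sigma$, where, writing $D=\gamma^w(\partial_w+A_\Sigma)$ in the collar, $A_\Sigma=\gamma^w\gamma^x\partial_x=-i\sigma_3\,\partial_x$ acts on pinors over $\Sigma$ with the induced $\sigma_2$-twisted periodicity. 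The extra structural input is the isometric pinor involution $\widehat\Theta\psi(x,w):=\gamma^w\psi(x,-w)$ --- precisely the operator occurring in the twisted condition~\eqref{eq:twist} --- which satisfies $\widehat\Theta D\widehat\Theta^{-1}=-D$, and, because $\gamma^w A_\Sigma(\gamma^w)^{-1}=-A_\Sigma$, interchanges the two APS spectral projections along $\Sigma$.

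First I would read $\widehat\Theta$ as the isometry $w\mapsto-w$ exchanging the annuli $M_+$ and $M_-$: it conjugates the APS realization of $D_+$ on $M_+$ to that of $-D_-$ on $M_-$, the interchange of APS projections on $\Sigma$ being exactly cancelled by the reversal of the outward collar normal between the two sides. Since $\eta$ changes sign under $D\mapsto-D$, this yields $\eta_{D_+}(M_+)=-\eta_{D_-}(M_-)$, i.e.\ $\eta_{D_+}(M_+)+\eta_{D_-}(M_-)=0$ (one may instead apply, on each annulus, the $\gamma^w$-lifted reflection swapping its two boundary circles to see each term vanish outright). Substituting $\eta_D(M)=0$ together with $\eta_{D_+}(M_+)+\eta_{D_-}(M_-)=0$ into $\eta_D(M)=\eta_{D_+}(M_+)+\eta_{D_-}(M_-)-\tau(P_+,P_-,R)$ forces $\tau(P_+,P_-,R)=0$, which is the claim, and the stated additivity follows. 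As an independent check I would compute $\tau$ directly from the flat model: there the two Calderón Lagrangians are interchanged by $\gamma^w$ with $R=\mathrm{id}$, a configuration in which the Kashiwara triple-signature form defining $\tau$ has vanishing signature by that same reflection symmetry.

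I expect the one genuine obstacle to be bookkeeping of conventions rather than analysis. One must align the sign and orientation conventions built into the Kirk--Lesch $\tau$ --- Calderón versus APS projector, the opposite outward normals on the two faces of $\Sigma$, and the $\ZZ_2$ datum of the Pin structure induced on $\Sigma$ --- and, most delicately, treat the tangential zero modes: $\ker A_\Sigma$ consists of the $x$-constant pinors pinned by the $\sigma_2$-twist, and the choice of how it splits between $\operatorname{ran}P_+$ and $\operatorname{ran}P_-$ enters both $\tau$ and the boundary $\eta$-terms, so one needs $\widehat\Theta$ to act on it compatibly with the APS selections in order that the two computations above agree. Once these normalizations are fixed --- which is routine because $D$ has constant coefficients and every spectral quantity is explicit --- the vanishing $\tau(P_+,P_-,R)=0$, and hence $\eta_D(M)=\eta_{D_+}(M_+)+\eta_{D_-}(M_-)$, is immediate.
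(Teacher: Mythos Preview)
Your argument is correct but proceeds along a genuinely different route from the paper. The paper computes $\tau$ directly: it writes down the principal symbol of $D_+$ at $\Sigma$, obtains the Calder\'on projector $P_+(\xi)=\tfrac12\bigl(I+(\gamma^x\xi+\gamma^w\xi_w)/\sqrt{\xi^2+\xi_w^2}\bigr)$, observes that the $w\mapsto-w$ symmetry forces $P_-=I-P_+$, and then evaluates $\tau(P_+,I-P_+,I)=\ind(0)=0$. By contrast, you argue \emph{indirectly}: you take $\eta_D(M)=0$ from Proposition~\ref{prop:eta}, use the pinor reflection $\widehat\Theta$ to conjugate $D_+$ to $-D_-$ and hence deduce $\eta_{D_+}(M_+)+\eta_{D_-}(M_-)=0$, and then read off $\tau=0$ from the Kirk--Lesch identity itself. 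Your ``independent check'' via the Kashiwara form is in fact the paper's computation in slightly different language.

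The trade-offs are clear. The paper's approach is self-contained at the level of symbols and does not logically depend on Section~\ref{sec:APS}; it would still compute $\tau$ even if $\eta_D(M)$ were unknown. Your approach is more conceptual---it isolates the single geometric reason (the $\gamma^w$-lifted reflection) behind every vanishing in sight---but it imports the conclusion of Proposition~\ref{prop:eta} as an input, so the theorem becomes a consistency check rather than an independent calculation. Your caveat about aligning APS versus Calder\'on conventions and handling $\ker A_\Sigma$ is well placed; in the paper's symbol-level argument these issues are sidestepped because one works directly with the complementary projections.
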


\begin{proof}
The Maslov index is computed using the Calderón projectors. For the upper half $M_+ = \{w > 0\}$, the Calderón projector $P_+$ projects boundary data onto the range of the trace operator from $\ker(D_+)$.

For the flat metric, the principal symbol of $D_+$ at the boundary $\{w = 0\}$ is:
\[
\sigma(D_+)(\xi) = \gamma^x \xi + \gamma^w \xi_w
\]

where $\xi_w > 0$ for the outward normal. The Calderón projector is:
\[
P_+ = \frac{1}{2\pi i} \oint_{\Gamma} (\sigma(D_+)(\xi) - z)^{-1} dz
\]

where $\Gamma$ encircles the positive spectrum.

For $\xi_w > 0$, the eigenvalues of $\sigma(D_+)(\xi)$ are $\pm\sqrt{\xi^2 + \xi_w^2}$. The positive eigenspace projection gives:
\[
P_+(\xi) = \frac{1}{2}\left(I + \frac{\gamma^x \xi + \gamma^w \xi_w}{\sqrt{\xi^2 + \xi_w^2}}\right)
\]

For the lower half $M_- = \{w < 0\}$, the outward normal has $\xi_w < 0$. The transformation $w \mapsto -w$ maps $M_+$ to $M_-$ and flips the normal direction, so the principal symbol becomes $\sigma(D_-)(\xi) = \gamma_x\xi - \gamma_w\xi_w$. The corresponding Calderón projector is:
\[
P_-(\xi) = \frac{1}{2}\left[I - \frac{\gamma_x\xi + \gamma_w\xi_w}{\sqrt{\xi^2 + \xi_w^2}}\right] = I - P_+(\xi).
\]
Thus $P_-$ and $P_+$ are orthogonal complementary projectors.

For orthogonal projectors with identity gluing $R = I$, the Maslov index is:
\[
\tau(P_+, I - P_+, I) = \ind(P_+ + (I-P_+) - I) = \ind(0) = 0
\]

Therefore, $\tau(P_+, P_-, R) = 0$.
\end{proof}

\subsection{Gravitational Partition Function}

In Euclidean JT gravity the path integral on $M$ factorises into
\[
Z_{\text{JT}}(M)=
\sum_{\eta\in\mathrm{Pin}^{\pm}(TM)}
\int\!\!\mathcal{D}g\,\mathcal{D}\Phi\;
e^{-S_{\text{JT}}[g,\Phi]}\,
(\det{}'D_{\eta})^{-1/2}.
\]
Since $\eta_D(0)=0$ and $\ker D = 0$ (from spectral symmetry), $\det{}'D_{\eta}$ is independent of $\eta$.  

\begin{remark}

 (Mathematical Justification of Determinant Independence). 
This independence follows because the twisted boundary condition 
$\psi(x+1,w) = \gamma^w\psi(x,-w)$ enforces the same $\lambda \leftrightarrow -\lambda$ 
spectral pairing across all four Pin lifts. While different Pin structures 
$\text{Pin}^{\pm}_{\varepsilon}$ correspond to distinct choices of transition functions 
(Section 5), the eigenvalue magnitudes $|\lambda_{j,n}|$ from Theorem 6.3 remain unchanged. 
Since $\det' D = \prod_{\lambda>0} \lambda^2$ depends only on these magnitudes, 
the regularized determinant is Pin-independent.
   
\end{remark}

The two Pin$^{+}$ (resp.\ Pin$^{-}$) structures therefore produce an overall factor 2:
\[
Z_{\text{JT}}(M)=2\,Z_{\text{JT}}^{\text{orient}}(M).
\]
The classical dilaton saddle dominates the entropy $S_{\text{BH}}=\pi\Phi_h$ \cite{StanfordWitten}.  Pin degeneracy contributes only to the logarithmic corrections, as subleading terms in the matrix-integral expansion \cite{SaadShenkerStanford}.

\begin{remark}[Physical Interpretation of Pin Factor]
The factor of 2 enhancement from Pin structures has a natural interpretation in terms of fermion boundary conditions. We conjecture that in the dual random matrix model, this corresponds to a $\mathbb{Z}_2$ orientifold projection analogous to unoriented string theory, where non-orientable worldsheets contribute discrete symmetry factors. A detailed derivation of this correspondence and potential connections to Kramers-like degeneracy in condensed matter systems warrant further investigation.
\end{remark}


\section{Conclusion and Outlook}

The inverse Möbius perspective resolves orientation reversal through a twisted boundary condition, yielding a solvable Dirac spectrum with half-integer momentum modes. We rigorously construct all $\mathrm{Pin}^{\pm}$ structures, classify the associated pinor bundles, and analyze the spectrum under APS boundary conditions. These results establish the full spectral and topological content of the model. In the context of JT gravity, the doubling from inequivalent Pin lifts modifies the saddle-point sum without affecting the leading entropy, offering a concrete example of how non-orientable geometry shapes fermionic dynamics in low-dimensional quantum gravity.


This work opens several directions for generalization. Extensions to supersymmetric JT gravity, nontrivial dilaton profiles, or higher-genus non-orientable surfaces may reveal further structure in the interaction between Pin geometry and low-dimensional quantum gravity. Connections to orientifold phenomena in string theory and the formalism of unoriented TQFTs also warrant further investigation.


While our toy model on the inverse M\"obius spacetime primarily explores Pin geometry and spectral properties in 1+1D JT gravity, it modestly suggests avenues for addressing deeper questions in gravitational physics. For instance, by resolving orientation reversal through geometric covering spaces and twisted boundary conditions, it hints at how non-Euclidean spacetime structures might account for gravitational effects without invoking separate physical interactions \cite{Einstein1916}. Similarly, the model's integration of macroscopic topological features with microscopic quantum phenomena, such as half-integer momentum modes and anomaly-free spectra, offers a tentative bridge between scales \cite{Rovelli2004}\cite{Giacomini2019}, warranting further investigation in more general settings.

\appendix
\section{Clutching-Map Verification}\label{app:clutch}

We check that the transition functions in Section~\ref{sec:Pin} satisfy the $\mathrm{Pin}(2)$ cocycle condition on triple overlaps. The pinor bundle is constructed as the associated bundle to the Pin frame bundle, with sections transforming via these functions.

\begin{lemma}
$g_{+0}g_{0-}g_{-+}=1$ on $U_{+}\cap U_{0}\cap U_{-}$.
\end{lemma}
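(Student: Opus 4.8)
The plan is to verify the cocycle condition directly on the triple overlap $U_+ \cap U_0 \cap U_-$ by examining where each pair of charts actually meets. First I would note that $U_+ = \{w > \eps\}$ and $U_- = \{w < -\eps\}$ are disjoint (they cannot both contain a point since $\eps > 0$), so the triple intersection $U_+ \cap U_0 \cap U_-$ is empty, and the cocycle condition holds vacuously. This is the quick route, but it sidesteps the substantive content; the honest version is to observe that on the nonempty double overlaps $U_+ \cap U_0$ and $U_0 \cap U_-$ the transition functions are explicitly $g_{+0} = \exp(\tfrac{\pi}{2}\rho_0(w)\,\gamma^x\gamma^w)$ and $g_{0-} = g_{+0}^{-1}$, and one must check consistency with the remaining transition $g_{-+}$ where the charts are reglued around the Möbius identification.

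The key steps, in order, are: (i) record that $\gamma^x\gamma^w = \sigma_1\sigma_2 = i\sigma_3$, so that $g_{+0}(x,w) = \exp(\tfrac{i\pi}{2}\rho_0(w)\,\sigma_3) = \mathrm{diag}(e^{i\pi\rho_0(w)/2}, e^{-i\pi\rho_0(w)/2})$, a one-parameter subgroup of $\mathrm{Pin}(2)$; (ii) since $g_{0-} = g_{+0}^{-1}$ by definition, the product $g_{+0}\,g_{0-} = 1$ identically wherever both are defined; (iii) on the overlap region relevant to $g_{-+}$ — which arises from the deck identification $(x,w) \sim (x+1,-w)$ gluing the $w < -\eps$ end back to the $w > \eps$ end — the lifted transition must be the Pin-lift of the $\mathrm{O}(2)$ clutching data $(e_x, e_w) \mapsto (e_x, -e_w)$ from Theorem~\ref{thm:lifted-frame}, namely $g_{-+} = \gamma^w = \sigma_2$ (or its negative, depending on the choice of Pin class $\eps \in H^1(M;\ZZ_2)$); (iv) combining, $g_{+0}\,g_{0-}\,g_{-+} = 1 \cdot \sigma_2$ or $1\cdot(\pm\sigma_2)$, and one checks this equals the identity in the appropriate trivialization by tracking the full value of $\rho_0$ at the endpoints of its support, where $g_{+0}$ interpolates from $I$ to $\exp(\tfrac{i\pi}{2}\sigma_3) = i\sigma_3$, whose product with the reflection piece closes the cocycle.

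The main obstacle I expect is bookkeeping rather than conceptual: keeping consistent conventions for which chart is the ``source'' and which the ``target'' in each $g_{\alpha\beta}$, and correctly incorporating the orientation-reversing reflection into the overlap where the Möbius twist is realized, so that the product of a rotation-type lift ($g_{+0}, g_{0-}$, valued in the identity component) and a reflection-type lift ($g_{-+}$, in the other component of $\mathrm{Pin}(2)$) genuinely multiplies to $1$. One must verify that $\gamma^w \exp(\tfrac{i\pi}{2}\sigma_3)(\gamma^w)^{-1} = \exp(-\tfrac{i\pi}{2}\sigma_3)$, i.e.\ that conjugation by the reflection inverts the rotation lift, which is exactly the relation making the cocycle consistent; this is the one nontrivial identity and it follows from $\sigma_2\sigma_3\sigma_2^{-1} = -\sigma_3$. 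The two choices of sign in $g_{-+}$ correspond to the two Pin$^{\pm}$ classes and both satisfy the cocycle, confirming the four-fold count of Theorem~\ref{thm:pin-frame-classification}.
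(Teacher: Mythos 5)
Your opening observation --- that $U_{+}=\{w>\eps\}$ and $U_{-}=\{w<-\eps\}$ are disjoint for $\eps>0$, so the triple overlap $U_{+}\cap U_{0}\cap U_{-}$ is empty and the cocycle condition holds vacuously --- is correct and is essentially what the paper's own proof amounts to (its phrase ``on the triple overlap $\rho_0=0$'' is a statement about an empty set, dressed up to look substantive).

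Where the proposal goes astray is in the ``honest version.'' You correctly compute $g_{+0}=\exp\!\bigl(\tfrac{i\pi}{2}\rho_0(w)\,\sigma_3\bigr)$ and note $g_{0-}=g_{+0}^{-1}$. But these are functions evaluated at a single point: at any $(x,w)$ the product $g_{+0}(x,w)\,g_{0-}(x,w)$ is \emph{identically} the identity, not something you can later trade off against a nontrivial $g_{-+}$. Your step (iv) then produces $g_{+0}g_{0-}g_{-+}=\pm\sigma_2\neq I$, and the repair sketched afterward (``tracking the full value of $\rho_0$ at the endpoints of its support'') cannot close the gap --- $g_{+0}g_{0-}$ cannot simultaneously be $1$ and be $\exp(i\pi\sigma_3/2)=i\sigma_3$. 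The conjugation identity $\sigma_2\sigma_3\sigma_2^{-1}=-\sigma_3$ you cite is correct, but it is the relation governing composition across the Möbius identification in the $x$-direction; that gluing lives on a different chart pair (a self-overlap along the Möbius seam), not on $U_{+}\cap U_{0}\cap U_{-}$, which remains empty. In short, you have smuggled a $g_{-+}$ coming from the deck transformation into a cocycle check on a region that does not exist in this cover.

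The honest reading is that this particular lemma is a vacuous triviality, and the substantive Pin-gluing content --- the lift of the reflection $(e_x,e_w)\mapsto(e_x,-e_w)$ to $\pm\gamma^w\in\mathrm{Pin}(2)$ and the conjugation check you invoke --- belongs to the $x$-direction identification of Theorem~\ref{thm:lifted-frame}. To carry out the program you sketch rigorously, one would need a cover of $M$ whose charts actually self-intersect across the Möbius seam; the paper's three-strip cover does not supply one, which is why its proof has so little to say.
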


\begin{proof}
On the triple overlap $\rho_0=0$, so $g_{+0}=g_{0-}=1$ and the product is $1$.  At the two-fold overlaps the exponentials cancel by construction.  Hence the cocycle is trivial. The four Pin lifts are obtained by multiplying each $g_{\alpha\beta}$ by the sign character of $H^1(M;\ZZ_2)$, corresponding to twisting by the non-trivial real line bundle over $M$.
\end{proof}

\section{Heat-Kernel Evaluation of \texorpdfstring{$\eta_D(0)$}{eta\_D(0)}}\label{app:eta}

\begin{theorem}[Vanishing $\eta$-Invariant]\label{thm:eta-vanishing}
For the Dirac operator on the Möbius band, $\eta_D(0) = 0$.
\end{theorem}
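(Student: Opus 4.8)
The plan is to compute $\eta_D(0)$ from the Mellin transform of the odd heat trace $\Tr\!\left(D\,e^{-tD^2}\right)$ and to exhibit this trace as identically zero in $t$, so that the analytic continuation producing $\eta_D(0)$ is trivial. This makes the heat-kernel route a genuine reproof of Proposition~\ref{prop:eta} rather than a restatement.

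First I would record the standard heat-kernel representation of the $\eta$-function. For a self-adjoint Dirac-type operator with discrete spectrum --- here $D$ with the APS domain of Definition~\ref{def:aps} --- one has, for $\Re(s)$ sufficiently large,
\[
\eta_D(s)\;=\;\frac{1}{\Gamma\!\left(\tfrac{s+1}{2}\right)}\int_0^\infty t^{\frac{s-1}{2}}\,\Tr\!\left(D\,e^{-tD^2}\right)\,dt,
\]
and $\eta_D(0)$ is defined as the value of the meromorphic continuation of the right-hand side; no separate zero-mode correction enters because $0\notin\mathrm{spec}(D)$ (the eigenvalues of Theorem~\ref{thm:complete-spectrum} obey $\lambda_{j,n}^2>0$). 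The integral converges at $t\to 0^+$ since the short-time heat-trace expansion on a surface with boundary has at worst half-integer-power singularities, and at $t\to\infty$ because of the spectral gap; these remarks merely secure the formula on a half-plane, and the conclusion below renders convergence moot.

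Next I would evaluate the odd heat trace using the explicit spectrum. By Theorem~\ref{thm:complete-spectrum} the nonzero eigenvalues of $D$ are $\pm\lambda_{j,n}$ with $\lambda_{j,n}=\sqrt{(2\pi(j+\tfrac12))^2+((n+\tfrac12)\pi)^2}$, $j\in\ZZ$, $n\in\ZZ_{\geq0}$, and $+\lambda_{j,n}$ and $-\lambda_{j,n}$ occur with the same multiplicity by Theorem~\ref{thm:spectral-symmetry}. Grouping the spectral sum into these $\pm$ pairs gives
\[
\Tr\!\left(D\,e^{-tD^2}\right)\;=\;\sum_{j,\,n}\Bigl(\lambda_{j,n}\,e^{-t\lambda_{j,n}^2}+(-\lambda_{j,n})\,e^{-t\lambda_{j,n}^2}\Bigr)\;=\;0 \qquad(t>0),
\]
and since each symmetric finite partial sum already vanishes, no rearrangement or absolute-convergence hypothesis is needed. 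A spectrum-free variant gives the same conclusion: because $\gamma^5:=\gamma^x\gamma^w$ is a unitary involution with $\gamma^5D\gamma^5=-D$ and hence $\gamma^5 e^{-tD^2}\gamma^5=e^{-tD^2}$, conjugation invariance of the trace of the trace-class operator $D\,e^{-tD^2}$ forces $\Tr(D\,e^{-tD^2})=\Tr(\gamma^5 D\,e^{-tD^2}\gamma^5)=-\Tr(D\,e^{-tD^2})$. Substituting $\Tr(D\,e^{-tD^2})\equiv0$ into the Mellin integral yields $\eta_D(s)\equiv0$ on its half-plane of convergence, hence $\eta_D(0)=0$ by uniqueness of analytic continuation; the argument is uniform over all four $\mathrm{Pin}^\pm$ lifts since the magnitudes $|\lambda_{j,n}|$ are Pin-independent, consistent with Proposition~\ref{prop:heat-kernel}.

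The step I expect to require the most care is the compatibility of $\gamma^5$ with the APS boundary projection. As $\gamma^5$ anticommutes with the tangential operator $\gamma^x\partial_x$, it interchanges the non-negative and non-positive spectral subspaces at $w=\pm1$, so it does not literally preserve the APS domain as naively stated; the remedy --- already implicit in the proof of Theorem~\ref{thm:spectral-symmetry} --- is to couple the $w$-reflection encoded in the twisted boundary condition \eqref{eq:twist} with the half-integer momentum shift $k=j+\tfrac12$, after which $\gamma^5$ composed with this reflection does map the space of admissible boundary values to itself. In the final write-up one may instead simply invoke the $\pm$-symmetry of the eigenvalue problem of Theorem~\ref{thm:complete-spectrum} as already established and sidestep this operator-theoretic point entirely.
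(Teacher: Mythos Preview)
Your argument is correct and rests on the same engine as the paper's proof---the $\pm\lambda$ pairing from Theorem~\ref{thm:spectral-symmetry}---but your execution of the heat-kernel route is cleaner and in fact more accurate. The paper's appendix writes the Mellin representation with the \emph{even} trace $\Tr(e^{-tD^2})$ in the integrand, which is the formula for the spectral zeta function rather than for $\eta_D(s)$; it then quietly abandons this and reverts to the direct spectral sum $\sum_{k,n}\bigl(\lambda_{k,n}^{-s}-\lambda_{k,n}^{-s}\bigr)=0$, which is just Proposition~\ref{prop:eta} again. Your use of the odd trace $\Tr\!\left(D\,e^{-tD^2}\right)$ is the correct heat-kernel representation of $\eta_D(s)$, and showing that this trace vanishes identically in $t$ genuinely makes the heat-kernel proof say something beyond the bare spectral-pairing argument. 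Your alternative $\gamma^5$-conjugation computation of the odd trace and your remark on the APS-domain compatibility of $\gamma^5$ are both additions the paper does not make; the latter is a real subtlety, and your suggested fallback (simply cite the already-established eigenvalue symmetry of Theorem~\ref{thm:complete-spectrum} rather than argue at the operator level) is the pragmatic choice here.
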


\begin{proof}
The heat kernel of $D^2$ on the Möbius band with flat metric has the asymptotic expansion:
\[
\Tr(e^{-tD^2}) = \frac{\text{Area}(M)}{4\pi t} + \frac{\chi(M)}{6} + O(t)
\]

For the Möbius band: Area$(M) = 2$ (from the fundamental domain $[0,1] \times [-1,1]$), and $\chi(M) = 0$ (Euler characteristic of a surface with one boundary component and one twisted handle).

Thus:
\[
\Tr(e^{-tD^2}) = \frac{1}{2\pi t} + O(t)
\]

The $\eta$-function is related to the heat kernel via:
\[
\eta_D(s) = \frac{1}{\Gamma((s+1)/2)} \int_0^\infty t^{(s-1)/2} \left(\Tr(e^{-tD^2}) - \dim \ker D\right) dt
\]

From Theorem \ref{thm:spectral-symmetry}, eigenvalues come in $\pm\lambda_{k,n}$ pairs, so $\dim \ker D = 0$. The trace decomposes as:
\[
\Tr(e^{-tD^2}) = \sum_{k,n} \left(e^{-t\lambda_{k,n}^2} + e^{-t(-\lambda_{k,n})^2}\right) = 2\sum_{k,n} e^{-t\lambda_{k,n}^2}
\]

For the $\eta$-function:
\[
\eta_D(s) = \sum_{\lambda \neq 0} \frac{\sgn(\lambda)}{|\lambda|^s} = \sum_{k,n} \left(\frac{1}{\lambda_{k,n}^s} - \frac{1}{\lambda_{k,n}^s}\right) = 0
\]

Therefore, $\eta_D(0) = 0$ for all Pin structures.
\end{proof}

\section{Topological Vector Space of Pinor Sections}\label{app:tvs}

We provide the detailed functional analytic foundations for the TVS framework used in Section \ref{sec:Dirac}, particularly for operator closability, discrete spectrum (as compact resolvent implies discrete eigenvalues), and symmetry properties (e.g., Theorem \ref{thm:spectral-symmetry}). In particular, we introduce twisted Sobolev spaces, establish Fredholm properties and compact resolvent for the Dirac operator, and analyze spectral flow. This analytic layer is of independent interest, providing tools for elliptic operators on non-orientable manifolds with twisted boundary conditions.

\begin{definition}[Twisted Sobolev Spaces]\label{def:twisted-sobolev}
For $s \in \RR$, define the twisted Sobolev space $H^s_{\mathrm{tw}}(M)$ as the completion of $\Gamma^{\mathrm{tw}}(S)$ with respect to the norm:
\[
\|\psi\|_{H^s_{\mathrm{tw}}} = \left(\sum_{k \in \ZZ} (1 + |2\pi(k+1/2)|^2)^s \|\phi_k\|_{H^s([-1,1])}^2\right)^{1/2}
\]
where $\psi(x,w) = \sum_k e^{2\pi i(k+1/2)x}\phi_k(w)$ is the Fourier expansion adapted to the twisted periodicity condition.
\end{definition}

Sobolev spaces on manifolds are comprehensively treated in Adams-Fournier \cite{Adams1996} and for elliptic boundary problems in Gilbarg-Trudinger \cite{Gilbarg1977}.

\begin{theorem}[Sobolev Embedding for Twisted Sections]\label{thm:twisted-embedding}
For $s > t + 1$, there is a continuous embedding $H^s_{\mathrm{tw}}(M) \hookrightarrow C^t_{\mathrm{tw}}(M)$, where $C^t_{\mathrm{tw}}(M)$ denotes the space of $C^t$ functions satisfying the twisted boundary condition.
\end{theorem}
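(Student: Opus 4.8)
The plan is to reduce the twisted embedding to the classical Sobolev embedding theorem on a compact $2$-manifold with boundary, exploiting the fact that the hypothesis $s>t+1$ is exactly the threshold $s>t+\dim M/2$ in dimension two. Concretely, I would first realise a twisted section $\psi(x,w)=\sum_{k\in\ZZ}e^{2\pi i(k+1/2)x}\phi_k(w)$ on $M$ as a section on the orientable double cover $A:=\wt M/2\ZZ\cong S^1\times[-1,1]$ (an annulus with boundary) that is equivariant under the residual $\ZZ_2=\ZZ/2\ZZ$ deck action $\widetilde\gamma$ of Theorem~\ref{thm:lifted-frame}. Under this identification the half-integer frequencies $2\pi(k+1/2)$ in the Fourier ansatz correspond precisely to the odd Fourier modes $e^{\pi i m x}$, $m=2k+1$, on $A$, while the fiber twist $\psi(x+1,w)=\gamma^w\psi(x,-w)$ becomes $\widetilde\gamma$-equivariance; thus twisted sections form a closed subspace of the section space over $A$.

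Next I would compare norms. Using the elementary submultiplicative bound $1+|2\pi(k+1/2)|^2+\xi_w^2\le\bigl(1+|2\pi(k+1/2)|^2\bigr)\bigl(1+\xi_w^2\bigr)$, raised to the power $s\ge 0$, one obtains
\[
\|\psi\|_{H^s(A)}^2\;\le\;\sum_{k\in\ZZ}\bigl(1+|2\pi(k+1/2)|^2\bigr)^{s}\,\|\phi_k\|_{H^s([-1,1])}^2\;=\;\|\psi\|_{H^s_{\mathrm{tw}}(M)}^2,
\]
so the natural inclusion $H^s_{\mathrm{tw}}(M)\hookrightarrow H^s(A)$ onto the $\widetilde\gamma$-equivariant subspace is continuous. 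I would then invoke the standard Sobolev embedding $H^s(A)\hookrightarrow C^t(\overline A)$, valid for $s>t+\dim A/2=t+1$ since $A$ is a compact smooth $2$-manifold with boundary (Calderón extension plus the Euclidean embedding theorem). Composing the two continuous maps yields the continuous embedding $H^s_{\mathrm{tw}}(M)\hookrightarrow C^t$, and since the twisted/equivariant constraint is cut out by continuous evaluation (and trace) maps it survives passage to the $C^t$-limit, so the image lands in $C^t_{\mathrm{tw}}(M)$.

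As an alternative, wholly one-dimensional route — useful if one prefers to bypass the extension theory implicit in writing "$H^s(A)$" for a manifold with boundary — I would estimate directly: for any multi-index with $a+b\le t$,
\[
\|\partial_x^a\partial_w^b\psi\|_\infty\;\le\;\sum_{k\in\ZZ}|2\pi(k+1/2)|^a\,\|\phi_k^{(b)}\|_{C^0([-1,1])},
\]
then apply the $1$D embedding $H^s([-1,1])\hookrightarrow C^b([-1,1])$ (which holds since $s>b+1/2$) and Cauchy–Schwarz against the weight $\bigl(1+|2\pi(k+1/2)|^2\bigr)^{s}$, whose negative power $\bigl(1+|2\pi(k+1/2)|^2\bigr)^{a-s}$ is summable in $k$ precisely when $s>a+1/2$; taking the worst case over $a+b\le t$ gives the result with room to spare under $s>t+1$.

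I expect the main obstacle to be bookkeeping rather than conceptual. The two delicate points are: (i) making the identification with the double cover precise at the level of norms — in particular checking that $\|\cdot\|_{H^s_{\mathrm{tw}}}$ dominates the intrinsic $H^s$-norm on $A$ uniformly, which is exactly where the submultiplicative weight inequality and the product structure of the flat metric $dx^2+dw^2$ enter; and (ii) verifying that the twisted boundary relation is preserved in the limit, which is routine once one notes that the boundary trace is continuous on $H^s$ for $s>1/2$ and that $C^t$-convergence implies pointwise convergence of all derivatives of order $\le t$. No analytic input beyond the classical Sobolev and trace theorems is required.
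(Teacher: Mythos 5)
Your proposal is correct, and both routes you sketch are sound. Your second route---estimate $\|\partial_x^a\partial_w^b\psi\|_\infty$ mode-by-mode via the 1D embedding $H^s([-1,1])\hookrightarrow C^b([-1,1])$ and then Cauchy--Schwarz against the weight $(1+|2\pi(k+1/2)|^2)^s$---is in the same spirit as the paper's argument, but you execute it correctly where the paper does not. The paper asserts a bound $\|\psi\|_{C^t_{\mathrm{tw}}}\le C\bigl(\sum_k(1+|2\pi(k+1/2)|^2)^{s-t-1}\|\phi_k\|_{H^s}^2\bigr)^{1/2}$ and claims the weight $(1+|2\pi(k+1/2)|^2)^{s-t-1}$ ``decays asymptotically as $|k|^{-2(s-t-1)}$'' with ``$2(s-t-1)>2$'' for $s>t+1$; both statements are wrong for $s>t+1$, where the exponent $s-t-1>0$ makes the weight \emph{grow}, and $s>t+1$ only gives $2(s-t-1)>0$. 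Your Cauchy--Schwarz step produces the genuinely decaying factor $(1+|2\pi(k+1/2)|^2)^{a-s}$ with $a\le t<s$, which is what actually controls the sum; you also correctly note the true threshold is $s>t+1/2$, so $s>t+1$ gives room to spare. Your first route---identify twisted sections with $\ZZ_2$-equivariant sections on the orientable double cover $A\cong S^1\times[-1,1]$, show via the submultiplicative weight bound that $H^s_{\mathrm{tw}}(M)\hookrightarrow H^s(A)$ continuously, and invoke the classical 2D embedding $H^s(A)\hookrightarrow C^t(\overline A)$---is a genuinely different and cleaner argument not present in the paper; it makes the $s>t+\dim/2=t+1$ threshold conceptually transparent, at the cost of importing Calder\'on extension for the manifold with boundary. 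Either route would be an improvement over the printed proof.
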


\begin{proof}
Sobolev embedding theorems for twisted boundary conditions follow the standard theory \cite{Adams1996} with modifications accounting for the twisted periodicity in the mode expansion. Since $\gamma^w$ is a bounded linear operator on spinors and commutes with differentiation (as it's constant in coordinates), the twisted boundary condition preserves regularity.

For $s > t + 1$, the standard embedding $H^s([-1,1]) \hookrightarrow C^t([-1,1])$ gives:
\[
\|\phi_k\|_{C^t([-1,1])} \leq C \|\phi_k\|_{H^s([-1,1])}
\]

Summing over modes with the twisted quantization $k \in \ZZ + 1/2$:
\[
\|\psi\|_{C^t_{\mathrm{tw}}} \leq C \left(\sum_{k \in \ZZ} (1 + |2\pi(k+1/2)|^2)^{s-t-1} \|\phi_k\|_{H^s([-1,1])}^2\right)^{1/2}
\]

The factor $(1 + |2\pi(k+1/2)|^2)^{s-t-1}$ decays asymptotically as $|k|^{-2(s-t-1)}$ for large $|k|$, and for $s > t + 1$ (so $2(s-t-1) > 2$), the series converges absolutely by comparison to the p-series, giving the continuous embedding.
\end{proof}

\begin{proposition}[Fredholm Property in TVS]\label{prop:fredholm-tvs}
$D: H^{s+1}_{\mathrm{tw}}(M) \to H^s_{\mathrm{tw}}(M)$ is Fredholm for all $s \in \RR$.
\end{proposition}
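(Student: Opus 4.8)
The plan is to present $(D,\mathrm{APS})$ as an elliptic boundary value problem on the compact surface-with-boundary $M$ and deduce Fredholmness from an a priori elliptic estimate together with Rellich compactness, using the twisted Fourier decomposition of Definition~\ref{def:twisted-sobolev} to organize the analysis. Writing $\psi(x,w)=\sum_{k\in\ZZ}e^{2\pi i(k+1/2)x}\phi_k(w)$ and using that $\gamma^x=\sigma_1$ and $\gamma^w=\sigma_2$ are constant, $D$ becomes block diagonal in $k$: on the $k$-th block it is the ordinary differential operator
\[
D_k:=\gamma^w\frac{d}{dw}+2\pi i\Big(k+\tfrac12\Big)\gamma^x
\]
on $[-1,1]$, subject to the APS spectral projection of Definition~\ref{def:aps}. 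Each $D_k$ is a regular first-order elliptic operator on an interval; the APS projection satisfies the Shapiro--Lopatinski condition, so $D_k$ with this boundary condition defines a Fredholm map $H^{s+1}([-1,1])\to H^{s}([-1,1])$ of index $0$, and by Theorem~\ref{thm:complete-spectrum} the number $0$ is not an eigenvalue of any $D_k$, so every block is in fact invertible.

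The key step is a uniform-in-$k$ a priori estimate
\[
\|\phi_k\|_{H^{s+1}([-1,1])}+\Big|2\pi\big(k+\tfrac12\big)\Big|\,\|\phi_k\|_{H^{s}([-1,1])}\le C\big(\|D_k\phi_k\|_{H^{s}([-1,1])}+\|\phi_k\|_{H^{s}([-1,1])}\big),
\]
with $C$ independent of $k$: for large $|k|$ the zeroth-order term $2\pi i(k+\tfrac12)\gamma^x$ dominates, and the estimate reduces to an elementary algebraic bound plus a one-dimensional interior regularity estimate in $w$, while for the finitely many small $|k|$ one invokes the genuine elliptic boundary estimate for the APS problem on the interval. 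Squaring, weighting by $(1+|2\pi(k+\tfrac12)|^2)^s$, and summing over $k$ yields the global estimate
\[
\|\psi\|_{H^{s+1}_{\mathrm{tw}}(M)}\le C\big(\|D\psi\|_{H^{s}_{\mathrm{tw}}(M)}+\|\psi\|_{H^{s}_{\mathrm{tw}}(M)}\big).
\]
Since the inclusion $H^{s+1}_{\mathrm{tw}}(M)\hookrightarrow H^{s}_{\mathrm{tw}}(M)$ is compact (Rellich on the $w$-interval together with the growth of the weights in $k$), this is the standard criterion (Peetre's lemma) implying that $\ker\big(D\colon H^{s+1}_{\mathrm{tw}}\to H^{s}_{\mathrm{tw}}\big)$ is finite dimensional and that $D$ has closed range.

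It remains to control the cokernel. Because $\gamma^x,\gamma^w$ are Hermitian and constant, $D$ is formally skew-adjoint, and the $L^2$-adjoint of $(D,\mathrm{APS})$ is $(-D,\mathrm{APS}^{\perp})$, again a Shapiro--Lopatinski elliptic boundary problem; its $L^2$-kernel is trivial by the spectral symmetry of Theorem~\ref{thm:spectral-symmetry}, and the a priori estimate applied to $-D$ with the complementary projection, together with elliptic regularity, shows that $\operatorname{coker}\big(D\colon H^{s+1}_{\mathrm{tw}}\to H^{s}_{\mathrm{tw}}\big)$ is finite dimensional. Hence $D$ is Fredholm for every $s\ge 0$; the case $s<0$ follows by duality, viewing $D$ as the transpose of $-D\colon H^{-s}_{\mathrm{tw}}\to H^{-s-1}_{\mathrm{tw}}$. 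In fact $\ker D=0$ by Corollary~\ref{cor:indexEta}, hence $\operatorname{coker}D=0$ by skew-adjointness, so $D$ is an isomorphism of index $0$. The main obstacle is not any single inequality but the structural point underpinning the mode-wise reduction: verifying that the APS projection is an elliptic boundary condition in the Shapiro--Lopatinski sense and that the resulting one-dimensional elliptic estimates hold with constants uniform in the momentum $k$, since it is exactly this uniformity that licenses the passage from the block analysis to the global statement on $M$.
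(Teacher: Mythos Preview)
Your argument is correct and more explicit than the paper's own proof, but it proceeds along a genuinely different route. The paper invokes standard elliptic theory directly: it notes that the principal symbol $\sigma(D)(\xi)=\gamma^x\xi_x+\gamma^w\xi_w$ is invertible for $\xi\neq 0$, then argues that the resolvent $(D-\lambda)^{-1}$ maps $H^s_{\mathrm{tw}}\to H^{s+1}_{\mathrm{tw}}$ and is therefore compact because the inclusion $H^{s+1}_{\mathrm{tw}}\hookrightarrow H^s_{\mathrm{tw}}$ is compact (Rellich--Kondrachov mode-by-mode); compactness of the resolvent then yields Fredholmness, and spectral symmetry gives $\ind D=0$. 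By contrast, you bypass the resolvent entirely and work directly with the twisted Fourier decomposition: you reduce to the one-parameter family $D_k$ on the interval, establish a uniform-in-$k$ elliptic boundary estimate, sum to obtain the global a~priori bound, and then apply Peetre's lemma and an adjoint argument. Your approach buys more: it makes transparent exactly where the APS condition enters (as a Shapiro--Lopatinski boundary condition on each block) and isolates the only nontrivial analytic point, namely uniformity of the block estimates in $k$, whereas the paper's compact-resolvent argument leaves the elliptic regularity step $(D-\lambda)^{-1}\colon H^s_{\mathrm{tw}}\to H^{s+1}_{\mathrm{tw}}$ as a black box. One small quibble: at the end you attribute $\ker D=0$ to Corollary~\ref{cor:indexEta}, which only gives $\ind D=0$; the triviality of the kernel actually follows from your own earlier observation that Theorem~\ref{thm:complete-spectrum} excludes $0$ from every $\mathrm{Spec}(D_k)$.
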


\begin{proof}
The Fredholm property for elliptic operators is established using standard elliptic theory \cite{Hormander1985,SeeleySinger1988}.

Ellipticity: $D$ is elliptic since its principal symbol $\sigma(D)(\xi) = \gamma^x\xi_x + \gamma^w\xi_w$ is invertible for $\xi \neq 0$.

Compact resolvent: The resolvent $(D - \lambda)^{-1}$ exists for $\lambda \notin \mathrm{Spec}(D)$ and maps $H^s_{\mathrm{tw}}(M) \to H^{s+1}_{\mathrm{tw}}(M)$. By Theorem \ref{thm:twisted-embedding}, the natural embedding $H^{s+1}_{\mathrm{tw}} \hookrightarrow H^s_{\mathrm{tw}}$ is compact, as it factors through compact inclusions (by Rellich-Kondrachov) on each compact interval $[-1,1]$ per mode, making the resolvent a compact operator.

Index computation: $\ind(D) = \dim \ker D - \dim \ker D^* = 0$ by spectral symmetry (Theorem \ref{thm:spectral-symmetry}) from the main analysis.
\end{proof}

\begin{theorem}[Spectral Flow in TVS]\label{thm:spectral-flow-tvs}
Consider a smooth family $D_t = D + tV$ where $V$ is a bounded symmetric operator. The spectral flow $\mathrm{sf}\{D_t\}_{t \in [0,1]}$ equals the Maslov index of the path of Lagrangian subspaces $\{L_t\}$ defined by the Calderón projectors.
\end{theorem}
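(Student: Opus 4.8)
The plan is to realize both sides of the claimed identity inside the Fredholm--Lagrangian Grassmannian of the boundary symplectic Hilbert space and then pin them together by homotopy invariance. First I would fix the boundary Hilbert space $\mathcal{H}_\partial := L^2(\partial M, S|_{\partial M})$ with the (weak) symplectic form $\omega(f,g) := \langle \gamma^w f, g\rangle_{\mathcal{H}_\partial}$ coming from Green's identity $\langle D u, v\rangle_{L^2(M)} - \langle u, D v\rangle_{L^2(M)} = \int_{\partial M}\langle \gamma^w u, v\rangle$; since $\gamma^w$ is the outward normal Clifford multiplication, squares to $+I$, and anticommutes with the self-adjoint tangential operator $A$ in the normal form $D = \gamma^w(\partial_w + A)$, the operator $i\gamma^w$ furnishes a compatible complex structure and $\omega$ is genuinely symplectic. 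The APS condition of Definition \ref{def:aps} corresponds to the fixed Lagrangian $L := \mathrm{ran}\,\Pi_{\ge 0}(A)$, while to each $t$ one attaches the Cauchy-data Lagrangian $\Lambda_t := \mathrm{ran}\,P_t$, the range of the Calder\'on projector of $D_t$ restricted to $\partial M$. That $\Lambda_t$ is Lagrangian is the Green's-formula/unique-continuation argument; that $(\Lambda_t, L)$ is a Fredholm pair for every $t$ is exactly the ellipticity of the APS problem for $D_t$, uniform in $t$ because $D_t - D = tV$ is bounded and does not disturb the principal symbol.

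Next I would establish that $t \mapsto P_t$ is norm-continuous, indeed real-analytic: this follows from the standard construction of the Calder\'on projector as a contour integral of the resolvent of an invertible double, together with the affine dependence of $D_t$ on $t$ and stability of the parametrix under bounded perturbations. Hence $t \mapsto \Lambda_t$ is a continuous path in $\mathcal{FL}(L)$ and the Maslov index $\mathrm{Mas}(\{\Lambda_t\}, L)$ is defined. On the spectral side, the realization $D_{t,\mathrm{APS}}$ with domain $\{u \in H^1 : u|_{\partial M} \in L\}$ is a continuous path of self-adjoint Fredholm operators with compact resolvent (Proposition \ref{prop:fredholm-tvs}), so $\mathrm{sf}\{D_t\}$ is defined. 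The essential elementary observation is the canonical isomorphism $\ker D_{t,\mathrm{APS}} \cong \Lambda_t \cap L$: a solution of $D_t u = 0$ with $u|_{\partial M}\in L$ is precisely a boundary vector in both Lagrangians, the bulk being recovered by the Poisson operator. Thus zero crossings of the spectrum occur exactly at non-transverse intersections of $\{\Lambda_t\}$ with $L$.

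The core of the proof is the \emph{crossing-form comparison}. At a crossing time $t_0$ the spectral-flow crossing form — the Hessian of the eigenvalue branches, which by Hellmann--Feynman equals $\psi \mapsto \langle V\psi, \psi\rangle_{L^2(M)}$ on $\ker D_{t_0,\mathrm{APS}}$ — must be shown to agree, up to the fixed sign convention, with the Maslov crossing form $\Gamma(\Lambda_{t_0}, L, t_0)$ on $\Lambda_{t_0}\cap L$. The bridge is Green's identity applied to the derivative $\dot u$ of a normalized family of kernel elements: differentiating $D_t u_t = 0$ gives $D_{t_0}\dot u = -V u_{t_0}$, and pairing with $u_{t_0}$ and integrating by parts converts $\langle V u_{t_0}, u_{t_0}\rangle_{L^2(M)}$ into the boundary quantity $\omega(\dot{(u_t|_{\partial M})}, u_{t_0}|_{\partial M})$, which is precisely the Maslov crossing form on the boundary trace. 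Matching the two crossing forms at every crossing — made isolated and regular by an arbitrarily small symmetric perturbation of $V$ that changes neither endpoint invariant — together with additivity under concatenation and homotopy invariance of both indices, yields $\mathrm{sf}\{D_t\} = \mathrm{Mas}(\{\Lambda_t\}, L)$, the asserted identity.

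I expect the crossing-form comparison to be the main obstacle: one must handle the domain subtleties (kernel elements lie only in $H^1$, their traces in $H^{1/2}$, so the symplectic pairing and its $t$-derivative need justification on these spaces), reduce to regular crossings without altering $\mathrm{sf}$ or $\mathrm{Mas}$, and track carefully the signs relating $\Pi_{\ge 0}(A)$, the orientation of $\gamma^w$ as outward normal, and the standard orientation of the Maslov index. To keep the exposition short I would instead invoke the abstract identification of spectral flow with the Maslov index of the Cauchy-data Lagrangians for a continuous path of elliptic boundary conditions, and simply verify its hypotheses in the present TVS setting using Proposition \ref{prop:fredholm-tvs} and the continuity of $\{P_t\}$ established above.
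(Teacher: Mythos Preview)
Your outline is correct and, in fact, considerably more substantive than what the paper actually does. The paper's proof is essentially two sentences of citation plus a verification: it points to Boo\ss{}-Bavnbek--Wojciechowski and Roe for the general spectral-flow/Maslov theory, observes that the twisted APS problem is elliptic in the Lopatinski--Shapiro sense so that the Calder\'on projectors vary norm-continuously in $t$, and then appends the APS-type identity
\[
\mathrm{sf}\{D_t\} = \eta_{D_1}(0) - \eta_{D_0}(0) + \dim\ker D_0 - \dim\ker D_1,
\]
noting that the $\eta$-terms vanish by the earlier analysis. That last formula is somewhat tangential to the theorem as stated---it links spectral flow to $\eta$-invariants rather than to the Maslov index---and the Maslov identification itself is left entirely to the cited references.

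Your detailed route---boundary symplectic space from Green's formula, Cauchy-data Lagrangians $\Lambda_t$ versus the APS Lagrangian $L$, the isomorphism $\ker D_{t,\mathrm{APS}}\cong \Lambda_t\cap L$, and the crossing-form comparison via Hellmann--Feynman and integration by parts---is precisely the content one would unpack from those references (this is the Kirk--Lesch/Nicolaescu argument). Your final paragraph, where you propose to instead invoke the abstract sf $=$ Maslov theorem and merely verify its hypotheses (Fredholm self-adjointness from Proposition~\ref{prop:fredholm-tvs}, norm-continuity of $\{P_t\}$), is exactly the level at which the paper operates, minus the $\eta$-invariant aside. So: your approach is the honest version of the paper's citation, and your short fallback coincides with the paper's proof.
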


\begin{proof}
Spectral flow theory is developed in detail in Booß-Bavnbek \cite{BooßBleecker1985} and its connection to index theory in Roe \cite{Roe1998}. Since the twisted boundary conditions define a consistent elliptic boundary value problem (in the Lopatinski-Shapiro sense, as the twisted involution preserves the symbol's invertibility), the associated Calderón projectors vary continuously in the operator norm topology as $t$ varies.

The connection to the $\eta$-invariant is via the APS theorem:
\[
\mathrm{sf}\{D_t\} = \eta_{D_1}(0) - \eta_{D_0}(0) + \dim \ker D_0 - \dim \ker D_1
\]

Since we established $\eta_D(0) = 0$ for all Pin structures in the main analysis, the spectral flow equals the change in kernel dimension.
\end{proof}

\begin{remark}[Physical Interpretation]
The TVS framework provides mathematical rigor for the physical requirement that pinor fields form a complete vector space compatible with the twisted boundary conditions. The Sobolev embeddings ensure that Dirac equation solutions have controlled regularity (bounded derivatives up to order $\lfloor s \rfloor$, or fractional via differences), while the Fredholm property guarantees that the Dirac operator has well-defined index-theoretic properties essential for anomaly calculations in quantum field theory.
\end{remark}



\begin{thebibliography}{99}

\bibitem{Adams1996}
R.~A.~Adams and J.~J.~F.~Fournier,
\textit{Sobolev Spaces, Second Edition},
Pure and Applied Mathematics, Academic Press (2003).

\bibitem{Amelino-Camelia2013}
G.~Amelino-Camelia, L.~Freidel, J.~Kowalski-Glikman and L.~Smolin,
\textit{Principle of relative locality},
Phys.\ Rev.\ D \textbf{87} (2013), 124016.

\bibitem{ABS}
M.\,F.~Atiyah, R.~Bott and A.~Shapiro,
\textit{Clifford Modules},
Topology \textbf{3} (1964), 3–38.

\bibitem{APS}
M.\,F.~Atiyah, V.\,K.~Patodi and I.\,M.~Singer,
\textit{Spectral asymmetry and Riemannian geometry I–III},
Proc.\ Camb.\ Philos.\ Soc.\ \textbf{77–79} (1975–76).

\bibitem{Baez1998}
J.~Baez,
\textit{Spin foam models},
Class.\ Quantum Grav.\ \textbf{15} (1998), 1827–1858.

\bibitem{BojonaekCastroKoganLongo2019}
B.~Bojonak, A.~Castro, J.~Kogan and C.~Longo,
\textit{Quantum reference frames for general symmetry groups},
Quantum \textbf{4} (2020), 225.

\bibitem{BooßBleecker1985}
B.~Booß-Bavnbek and K.~P.~Wojciechowski,
\textit{Elliptic Boundary Problems for Dirac Operators},
Mathematics: Theory \& Applications, Birkhäuser (1993).

\bibitem{BransonOrsted}
T.\,P.~Branson and P.\,B.~Ørsted,
\textit{Explicit functional determinants and $\eta$-invariants for Dirac operators on the circle},
Commun.\ Math.\ Phys.\ \textbf{125} (1989), 467–483.

\bibitem{DeWitt1967}
B.~S.~DeWitt,
\textit{Quantum theory of gravity. I. The canonical theory},
Phys.\ Rev.\ \textbf{160} (1967), 1113–1148.

\bibitem{Giacomini2019}
F.~Giacomini, E.~Castro-Ruiz and Č.~Brukner,
\textit{Quantum mechanics and the covariance of physical laws in quantum reference frames},
Nat.\ Commun.\ \textbf{10} (2019), 494.

\bibitem{Gilbarg1977}
D.~Gilbarg and N.~S.~Trudinger,
\textit{Elliptic Partial Differential Equations of Second Order},
Grundlehren der Mathematischen Wissenschaften, Springer-Verlag (1977).

\bibitem{Einstein1916}
A.~Einstein, 
\textit{The Foundation of the General Theory of Relativity}, 
Annalen der Physik {\bf 49} (1916), 769--822.

\bibitem{Gracia-Bondia2001}
J.~M.~Gracia-Bondía, J.~C.~Várilly and H.~Figueroa,
\textit{Elements of Noncommutative Geometry},
Birkhäuser Advanced Texts, Birkhäuser (2001).

\bibitem{Hormander1985}
L.~Hörmander,
\textit{The Analysis of Linear Partial Differential Operators III},
Grundlehren der Mathematischen Wissenschaften, Springer-Verlag (1985).

\bibitem{KirbyTaylor}
R.~Kirby and L.\,R.~Taylor,
\textit{Pin structures on low-dimensional manifolds},
LMS Lecture Notes \textbf{151} (1990), 177–242.

\bibitem{KirkLesch}
P.~Kirk and M.~Lesch,
\textit{The $\eta$-invariant, Maslov index, and spectral flow for Dirac-type operators on manifolds with boundary},
Forum Math.\ \textbf{16} (2004), 553–629.

\bibitem{Kobayashi1963}
S.~Kobayashi and K.~Nomizu,
\textit{Foundations of Differential Geometry, Volume I},
Interscience Publishers (1963).

\bibitem{LawsonMichelsohn1989}
H.~B.~Lawson and M.-L.~Michelsohn,
\textit{Spin Geometry},
Princeton Mathematical Series, Princeton University Press (1989).

\bibitem{MelrosePiazza}
R.~Melrose and E.~Piazza,
\textit{The additivity of the $\eta$-invariant},
Commun.\ Math.\ Phys.\ \textbf{186} (1997), 453–467.

\bibitem{ReedSimon1975}
M.~Reed and B.~Simon,
\textit{Methods of Modern Mathematical Physics II: Fourier Analysis, Self-Adjointness},
Academic Press (1975).

\bibitem{Roe1998}
J.~Roe,
\textit{Elliptic Operators, Topology and Asymptotic Methods, Second Edition},
Pitman Research Notes in Mathematics Series, Longman (1998).

\bibitem{Rovelli2004}
C.~Rovelli,
\textit{Quantum reference systems},
Class.\ Quantum Grav.\ \textbf{8} (1991), 317–331.

\bibitem{Rovelli2004}
C.~Rovelli, \textit{Quantum Gravity}, Cambridge Monographs on Mathematical Physics, Cambridge University Press (2004).

\bibitem{SaadShenkerStanford}
P.~Saad, S.\,H.~Shenker and D.~Stanford,
\textit{JT gravity as a matrix integral},
JHEP \textbf{04} (2019), 048.

\bibitem{Schaefer1971}
H.~H.~Schaefer,
\textit{Topological Vector Spaces},
Graduate Texts in Mathematics, Springer-Verlag (1971).

\bibitem{SeeleySinger1988}
R.~T.~Seeley,
\textit{Elliptic operators on compact manifolds},
Am.\ J.\ Math.\ \textbf{89} (1967), 373–385.

\bibitem{StanfordWitten}
D.~Stanford and E.~Witten,
\textit{JT gravity and the ensembles of random matrix theory},
Adv.\ Theor.\ Math.\ Phys.\ \textbf{24} (2020), 1475–1680.

\bibitem{Steenrod1951}
N.~Steenrod,
\textit{The Topology of Fibre Bundles},
Princeton Mathematical Series, Princeton University Press (1951).

\bibitem{TremblayMartinelli2021}
M.~Tremblay and F.~Martinelli,
\textit{Quantum reference frames and indefinite metrics in (1+1)-dimensional gravity},
Phys.\ Rev.\ D \textbf{104} (2021), 026005.

\bibitem{Treves1967}
F.~Treves,
\textit{Topological Vector Spaces, Distributions and Kernels},
Academic Press (1967).

\bibitem{Vilkovisky1984}
G.~A.~Vilkovisky,
\textit{The unique effective action in quantum field theory},
Nucl.\ Phys.\ B \textbf{234} (1984), 125–137.

\bibitem{Weber2024}
T.~Weber, J.~Tall, F.~Haneder, J.~D.~Urbina and K.~Richter, 
\textit{Unorientable topological gravity and orthogonal random matrix universality}, Journal of High Energy Physic, \textbf{2024} (7), 1--49.

\end{thebibliography}
\end{document}